\theoremstyle{plain}
\newtheorem{theorem}{Theorem}
\newtheorem{lemma}[theorem]{Lemma}
\newtheorem{obs}[theorem]{Observation}
\newtheorem{cor}[theorem]{Corollary}
\theoremstyle{definition}
\newtheorem{definition}{Definition}
\theoremstyle{remark}
\newtheorem{remark}{Remark}
\newtheorem{example}{Example}
\newcommand{\Prob}[2]{\Pr_{#1} \left( #2 \right)}
\newcommand{\tv}[1]{\left\|#1\right\|}
\newcommand{\1}{{\mathbf 1}}
\newcommand{\s}{{\mathbf s}}
\renewcommand{\epsilon}{\varepsilon}
\definecolor{emphcol}{gray}{.75}
\newcommand{\pne}[1]{
\colorbox{emphcol}{{
          #1
         }
   }
}
\begin{document}

\title{Imperfect best-response mechanisms}
\author{
Diodato~Ferraioli\thanks{DIAG, Sapienza Universit\`a di Roma}
\and
Paolo~Penna\thanks{Institute of Theoretical Computer Science, ETH Zurich}}
\date{}

\maketitle

\begin{abstract}
Best-response mechanisms (Nisan, Schapira, Valiant, Zohar, 2011) provide a unifying framework for studying various distributed protocols in which the participants are instructed to repeatedly best respond to each others' strategies. Two fundamental features of these mechanisms are convergence and incentive compatibility.

This work investigates convergence and incentive compatibility conditions of such mechanisms when players are not guaranteed to always best respond but they rather play an \emph{imperfect} best-response strategy. That is, at every time step every player deviates from the prescribed best-response strategy according to some probability parameter. The results explain to what extent convergence and incentive compatibility depend on the assumption that players never make mistakes, and how robust such protocols are to ``noise'' or ``mistakes''.
\end{abstract}

\raggedbottom

\section{Introduction}
One of the key issues in designing a distributed protocol (algorithm) is its convergence to a stable state, also known as self-stabilization. Intuitively, starting from any initial (arbitrarily corrupted) state, the protocol should eventually converge to  the ``correct state'' as intended by the designer.
Incentive compatibility considerations have been also become important in the study of distributed protocols since the participants cannot be assumed to altruistically implement the protocol if that is not beneficial for themselves.

A unifying game-theoretic approach for proving  
both convergence and incentive compatibility has been recently proposed by Nisan et al. \cite{NSVZ11}. They consider so-called  \emph{best-response mechanisms} or \emph{dynamics} in which the protocol prescribes  that each participant (or player) should simply best-respond to the strategy currently played by the other players. Essentially the same base \emph{game} is played over and over (or until some equilibrium is reached), with players updating their strategies in \emph{some} (unspecified) order. Nisan et al. \cite{NSVZ11} proved that for a suitable class of games the following happens:
\begin{itemize}
 \item \emph{Convergence.} The dynamics eventually
reaches a unique equilibrium point (a unique pure Nash equilibrium) of the base game regardless of the order in which players respond (including  concurrent responses).
 \item \emph{Incentive compatibility.}  A player who deviates from the prescribed best-response strategy can only worsen his/her final utility, that is, the dynamics will reach a different state that yields weakly smaller payoff.
\end{itemize}
These two conditions say that the protocol will eventually ``stabilize'' if implemented correctly, and that the participants are actually willing to do so.
Convergence itself is a rather strong condition because no assumption is made on how players are scheduled for updating their strategies, a  typical situation in \emph{asynchronous} settings. Incentive compatibility is also non-trivial because a best-response is a \emph{myopic} strategy which does not take into account the future updates of the other players. In fact, neither of these conditions can be guaranteed on general games.

Nisan et al. \cite{NSVZ11} showed that several protocols and mechanisms arising in computerized and economics settings are in fact best-response mechanisms over the \emph{restricted} class of games for which convergence and incentive compatibility are always guaranteed. Their applications include: (1) the Border Gateway Protocol (BGP) currently used in the Internet, (2) a game-theoretic version of the TCP protocol, and (3) mechanisms for the classical cost-sharing and stable roommates problems studied in micro-economics.

\medskip \noindent
In this work we address the following question:
\begin{quote}
\emph{What happens to these protocols/mechanisms if players do not always best respond?}
\end{quote}
Is it possible that when players sometimes deviate from the prescribed protocol (e.g., by making mistakes in computing their best-response or by scarce knowledge about other players' actions) then the protocol does not converge anymore? Can such mistakes induce some other player to adopt a non-best-response strategy that results in a better payoff? Such questions arise naturally from fault tolerant considerations in protocol design, and have several connections to equilibria computation and bounded rationality issues in game theory.

\paragraph{Our contribution.}
We investigate convergence and incentive compatibility conditions of mechanisms (dynamics) described in \cite{NSVZ11} when players are not guaranteed to always best respond but they rather play an \emph{imperfect} best-response strategy. That is, at every time step every player deviates from the prescribed best-response strategy according to some probability parameter $p\geq 0$. The parameter $p$ can be regarded as the probability of making a mistake
every time the player updates his/her strategy.

Our results indicate to what extent convergence and incentive compatibility depend on the assumption that players never make mistakes, and provide necessary and sufficient conditions for  the robustness of these mechanisms/dynamics:
\begin{itemize}
 \item \emph{Convergence.} Because of mistakes convergence can be achieved only in a probabilistic sense.
We give bounds on the parameter $p$ in order to guarantee convergence with sufficiently good probability.

One might think that for small values of $p$ our dynamics behaves (approximately) as the dynamics without mistakes, i.e. it converges to an equilibrium point regardless of the order in which players respond. However, it turns out this is not the case. Indeed,
our first negative result (Theorem~\ref{th:negative}) shows that even when  $p$ is exponentially small in the number $n$ of players the dynamics does not converge, i.e., the probability of being in the equilibrium is always small (interestingly, such negative result applies also to certain instances of BGP in the realistic model of Gao and Rexford \cite{GR01}).

The proof of this result shows the existence of a particularly ``bad'' schedule that amplifies the probability that the imperfect dynamics deviates from the perfect one. This highlights that imperfect dynamics differ from their perfect counterpart in which convergence results \emph{must consider} how players are scheduled. Indeed, we complement the negative result above with a general positive result (Theorem~\ref{th:convergence_solvable}) saying that convergence
can be guaranteed whenever $p$ is polynomially small in
some parameters defining both the game and the schedule of the players.
For such values of $p$, the upper bound on the convergence time of dynamics without mistakes is (nearly) an upper bound for the \emph{imperfect} best-response dynamics.
\item \emph{Incentive compatibility.} We first observe that games that are incentive compatible for dynamics without mistakes, may no longer be incentive compatible for imperfect best-response dynamics (Theorem~\ref{th:ic-negative}). In other words, a player who deviates incidentally from the mechanism induces another player to \emph{deliberately} deviate.  A sufficient condition for incentive compatibility of imperfect best-response mechanisms (Theorem~\ref{th:p-BR:incentive-compatible}) turns out to be a quantitative version of the one given in \cite{NSVZ11}. Roughly speaking, if the payoffs of the Nash equilibrium are \emph{sufficiently} larger than the other possible payoffs, then incentive compatibility holds. As the probability $p$ of making mistakes vanishes, the class of games for which convergence and incentive compatibility holds tends to the class of games in \cite{NSVZ11}.

\end{itemize}
Our focus is on the same class of (base) games of \cite{NSVZ11} since this is the only known  general class for which best-response dynamics converge (regardless of the schedule of the players) and are incentive compatible. In our view this class is important as it describes accurately certain protocols that are implemented in practice and it unifies several results in game theory. In particular, the mathematical model of how the commercial relationships between Autonomous Systems  (the Gao-Rexford model \cite{GR01}) leads to games in this class and, ultimately, to the fact that BGP converges and is incentive compatible \cite{LSZ11,NSVZ11}. Considering more general games for the analysis of BGP would in fact produce ``wrong'' results (constructing unrealistic examples for which the protocol does not converge or is not incentive compatible).

We nevertheless take one step further and apply the tools from \cite{NSVZ11} (and this work) to a natural generalization of their games. Intuitively speaking, these games guarantee only that best-response converge to a \emph{subgame}. In this case, the dynamics of the original game can be approximated by the dynamics of the subgame (Theorem~\ref{th:reducible}). Unfortunately, this ``reduction'' cannot be pushed further simply because the subgame can be an arbitrary game and different $p$-imperfect best-response dynamics lead to different equilibria (even for the same $p$; see Section~\ref{subsec:impossible}). However, when the dynamics on the subgame are well-understood, then we can infer their behavior also on the original game.

\paragraph{Related work.}
Convergence of best-response dynamics is a main topic in game theory. It relates to the so-called problem of equilibrium selection, that is, how the players converge to an equilibrium (see the book by Harsanyi and  Selten \cite{HarSel88}). Noisy versions of such dynamics have been studied in order to consider the effects of bounded rationality and limited knowledge of the players (which limits their ability to compute their best responses).

Our imperfect best response dynamics are similar to the \emph{mutation model} by Kandori et al.~\cite{KMR93}, and to the \emph{mistakes model} by Young~\cite{Young93}, and Kandori and Rob~\cite{KR95}. A related model is the \emph{logit dynamics} of Blume~\cite{blumeGEB93} in which the probability of a mistake depends on the payoffs of the game. All of these works assume a particular schedule of the players (the order in which they play in the dynamics). Whether such an assumption effects the selected equilibrium is the main focus of a recent work  by Alos-Ferrer and Netzer~\cite{afnGEB10}. They studied convergence of these dynamics on general games when the parameter $p$ vanishes, and provide a characterization of the resulting equilibria in terms of a kind of potential function of the game.
Convergence results that take into account non-vanishing $p$ are only known for fixed dynamics on specific class of games (see e.g. \cite{blumeGEB93,AulFerPasPenPer11,AFPPG12}).

Incentive compatibility of best-response dynamics provide a theoretical justification for several protocols and auctions widely adopted in practice. Levin et al.~\cite{LSZ11} proved convergence and incentive compatibility of the intricate BGP protocol in the current Internet (based on the mathematical model by Gao and Rexford~\cite{GR01} that captures the commercial structure that underlies the Internet and explains convergence of BGP). The theoretical analysis of TCP-inspired games by Godfrey et al.~\cite{GSZS10}
shows that certain variants of the current TCP protocol converge (the flow rate stabilizes) and are incentive compatible on arbitrary networks (this property assumes routers adopt specific queuing policy). The so-called Generalized Second-Price auctions used in many ad-auctions is another example of incentive compatible best-response mechanism as proved by Nisan et al.~\cite{NSVZ11auctions}. All of these problems (and others) and results have been unified by Nisan et al.~\cite{NSVZ11} in their framework.

\section{Definitions}\label{sec:defs}
We consider an $n$-player base game $G$ in which  each player $i$ has a finite set of strategies $S_i$, and a utility function $u_i$. Each player can select a strategy $s_i \in S_i$ and the vector
$\s=(s_1,\ldots,s_n)$ is the corresponding strategy profile, with $u_i(\s)$ being the payoff of player $i$.
To stress the dependency of the utility $u_i$ on the strategy $z_i$ of player $i$ we adopt the standard notation $(z_i,\s_{-i})$ to denote the vector $(s_1,\ldots,s_{i-1},z_i,s_{i+1},\ldots, s_n)$. 

\begin{definition}[best response]\label{def:best-response}
 A strategy $s_i^*\in S_i$ is a \emph{best response} to the strategy profile $\s_{-i}$ if this strategy maximizes $i$'s payoff, that is, for every other strategy $s_i' \in S_i$ it holds
\[
 u_i(s_i',\s_{-i}) \leq u_i(s_i^*,\s_{-i}).
\]
\end{definition}

\paragraph{(Imperfect) Best-response dynamics.}
A \emph{game dynamics} consists of a (possibly infinite) sequence of strategy profiles $\s^0,\s^1,\ldots$, where $\s^0$ is an arbitrarily chosen profile and the profile $\s^t$ is obtained from $\s^{t-1}$ by letting some of the players updating their strategies.  Therefore a game dynamics is determined by a \emph{schedule} of the players specifying, for each time step, the subset of players that are selected for updating their strategies and a \emph{response rule}, which specifies how a player updates her strategy (possibly depending on the past history and on the current strategy profile). A natural and well-studied response rule prescribes that players (myopically) choose the best response to the strategies currently chosen by the others (according to Definition~\ref{def:best-response}).

In this work we focus on dynamics based on the following kind of schedules and response rules.
As for the response rule, we consider a scenario in which a selected player can deviates from the (prescribed) best-response.
\begin{definition}[$p$-imperfect response rule]
 A response rule is $p$-imperfect if a player does not update her strategy to the best-response with probability at most $p$.
\end{definition}

Examples of these rules are given in the mutation \cite{KMR93} or mistakes models \cite{Young93,KR95} (see Appendix~\ref{apx:mistakes} for a brief overview). 
The best-response rule is obviously $0$-imperfect, which we also denote as \emph{perfect}.
The response rule in logit dynamics \cite{blumeGEB93} (see Appendix~\ref{apx:logit} for a brief overview) is $p$-imperfect with
\[
 p \leq \frac{m-1}{m-1 + e^\beta}
\]
for all games in which the payoff are integers\footnote{When the minimum difference in the payoff of an agent between a best-response and a non-best response is $\gamma$, this extends easily by taking $\beta_\gamma= \beta \cdot \gamma$ in place of $\beta$.} and each player has at most $m$ strategies.

In order to avoid trivial impossibility results on convergence we need to consider a non-adaptive adversarial schedule that satisfies some reasonable fairness condition.
We allow both deterministic and randomized schedules satisfying the following definition.
\begin{definition}[$(R,\varepsilon)$-fair schedule]\label{def:fair-selection}
 A schedule is $(R,\varepsilon)$-fair if there exists a nonnegative integer $R$ such that, for any interval of $R$ time steps, all players are selected at least once in this interval with probability at least $1-\varepsilon$, i.e. for every player $i$ and any time step $a$ we have
 $$
  \Pr(SEL_{i,a,R}) \geq 1- \varepsilon,
 $$
 where $SEL_{i,a,R}$ is the event that player $i$ is selected at least once in the interval $[a+1,a+R]$.
\end{definition}
The convergence rate of a dynamics is measured in number of rounds, where a \emph{round} is a sequence of consecutive time steps in which each player is selected for update at least once. The definition above affects the (expected) length of a round and the probability that a round has a certain length (number of steps). In particular, if a schedule is $(R,\varepsilon)$-fair, then, for every $0 < \delta < \varepsilon$, all players are selected at least once with probability at least $1 - \delta$ in an interval of $R \cdot \left\lceil \frac{\log (1/\delta)}{\log(1/\varepsilon)}\right\rceil$ time steps (this holds because the probability $1-\varepsilon$ is guaranteed for any interval of $R$ time steps). Another important parameter of the schedule is the \emph{maximum} number of players selected for update in one step by the schedule, denoted as $\eta$ (with $\eta \leq n$).

\begin{example}
 Scheduling players in round-robin fashion or concurrently corresponds to $(n,0)$-fair and $(1,0)$-fair schedules, respectively. Selecting a player uniformly at random at each time step is $(R,\varepsilon)$-fair with $R = O(n \log n)$.
\end{example}

Henceforth, we always refer as \emph{imperfect best-response dynamics} to any dynamics whose schedule is $(R,\varepsilon)$-fair and whose response rule is $p$-imperfect, and as \emph{imperfect best-response mechanisms} to the class of all imperfect best-response dynamics. 
Note that we do not put any other constraint on the way the dynamics run.  
For instance, the response rule of the players \emph{can} depend on the history of the dynamics, as long as the probability of not choosing a best response is at most $p$. 

\paragraph{Convergence and incentive compatibility.} We define convergence and incentive compatibility as in Nisan et al. \cite{NSVZ11}.
We say that a game dynamics for a game $G$ \emph{converges} if it eventually converges to a (pure) \emph{Nash equilibrium} of the game, i.e. there exists $t > 0$ such that the strategy profile of players at time step $t$ coincides with a Nash equilibrium of $G$.
This definition might seem limited in a setting in which players can ``make mistakes'' since, once a Nash equilibrium is reached, there is a positive probability of leaving again from this state. However, we choose to adopt this definition of convergence (the same as in \cite{NSVZ11}) because when the parameters of the dynamics guarantee convergence to a Nash equilibrium in a ``small'' number of rounds with ``good probability'', then the dynamics is likely to remain in the Nash equilibrium for ``many'' steps (see Corollary~\ref{cor:nash}). This is useful in applications in which there is a ``termination condition'' (for instance, the dynamics ends if no player updates her strategy for a  certain number of consecutive steps).

The above considerations on ``termination condition'' leads to the following definition of \emph{total utility} and the resulting \emph{incentive compatibility} condition, both defined as in \cite{NSVZ11}. 
Let us denote with $X^t$ the random variable that represent the strategy profile induced by a dynamics on a game $G$ after $t$ time steps. If the dynamics terminates after some finite number of steps $T$, then the \emph{total utility} of a player $i$ is defined as $\Gamma_i = E\left[u_i\left(X^T\right)\right]$; otherwise, that is if the dynamics does not terminate after finite time, the total utility is defined as $\Gamma_i = \lim \sup_{t \rightarrow \infty} E\left[u_i\left(X^t\right)\right]$. Then, a dynamics for a game $G$ is \emph{incentive compatible} if playing this dynamics is a pure Nash equilibrium in a new game $G^\star$ in which players' strategies are all possible response rule that may be used in $G$ and
players' utilities are given by their total utilities. That is, a dynamics for a game $G$ is incentive compatible if every player does not improves her total utilities by playing according to a response rule different from the one prescribed, given that each other player does not deviate from the prescribed response rule.

\paragraph{Never best-response and the main result in \cite{NSVZ11}.}
Nisan et al.~\cite{NSVZ11} analyzed the convergence and incentive compatibility of the (perfect) best-response dynamics. Before stating their result, let us now recall some definitions.
\begin{definition}[never best-response]
A strategy $s_i$ is a never best-response (NBR) for player $i$ if, for every $\s_{-i}$, there exists $s_i'$ such that $u_i(s_i,\s_{-i}) < u_i(s_i',\s_{-i})$.\footnote{Nisan et al.~\cite{NSVZ11} assume that each player has also a \emph{tie breaking rule} $\prec_i$, i.e., a total order on $S_i$, that depends solely on the player's private information. In the case that a tie breaking rule $\prec_i$ has been defined for player $i$, then $s_i$ is a NBR for $i$ also if $u_i(s_i, \s_{-i}) = u_i(s_i', \s_{-i})$ and $s_i \prec_i s_i'$. However, such tie-breaking rule can be implemented in a game by means of suitable perturbations of the utility function: with such an implementation our definition become equivalent to the one given in \cite{NSVZ11}.}
\end{definition}
Note that according to a $p$-imperfect response rule, a player updates her strategy to a NBR with probability at most $p$.
\begin{definition}[elimination sequence]
An elimination sequence for a game $G$ consists of a sequence of subgames
$$
 G = G_0 \supset G_1 \supset \cdots \supset G_r=\hat G,
$$
where any game $G_{k+1}$ is obtained from the previous one by letting
a player $i^{(k)}$ eliminate strategies which are NBR in $G_{k}$.
\end{definition}
The length of the shortest elimination sequence for a game $G$ is denoted with $\ell_G$ (we omit the subscript when it is clear from the context).
It is easy to see that for each game $\ell_G \leq n (m-1)$, where $m$ is the maximum number of strategies of a player.

Our results will focus on the following classes of games.
\begin{definition}[NBR-reducible and NBR-solvable games]
\hyphenation{re-du-ci-ble}
The game $G$ is NBR-reducible to $\hat G$ if there exists an elimination sequence for $G$ that ends in $\hat G$. The game $G$ is NBR-solvable if it is NBR-reducible to $\hat G$ and $\hat G$ consists of a unique profile.
\end{definition}

\begin{example}
Consider a $2$-player game with strategies $\{0,1,2\}$ and the following utilities:
\begin{equation}\label{eq:game}
\begin{game}{4}{3}
      & 0     & 1 & 2 \\
0   &0,0    &0,0   & 0,-2  \\
1   &0,0    &-1,-1 & -1,-2   \\
2   &-2,0   &-2,-1 & -2,-2
\end{game}\hspace*{\fill}
\end{equation}
Notice that strategy $2$ is a NBR for both players. Hence, there exists an elimination sequence of length 2 that reduces above game in its upper-left $2 \times 2$ subgame with strategy set $\{0,1\}$ for each player. Therefore, this game is NBR-reducible. If we modify the utilities in this upper-left $2 \times 2$ subgame as follows
$$
\begin{game}{3}{2}
    & 0     & 1\\
0   &0,0    &0,$-\delta$\\
1   &$-\delta$,0    &-1,-1
\end{game}
$$
then the game reduces further to the profile $(0,0)$ and hence it is NBR-solvable. Observe that the unique profile at which the game $G$ is reduced in an NBR-solvable game is also the unique Nash equilibrium of the original game.
\end{example}

While the convergence result of \cite{NSVZ11} holds for the class of NBR-solvable games, in order to guarantee incentive compatibility they introduce the following condition on the payoffs:
\begin{definition}[NBR-solvable with clear outcome]\label{def:clear-outcome}
A NBR-solvable game is said to have a clear outcome if, for every player $i$, there is a player-specific elimination sequence such that the following holds. If $i$ appears the first time in this sequence at position $k$, then in the subgame $G_k$ the profile that maximizes the utility of player $i=i^{(k)}$ is the Nash equilibrium.
\end{definition}

\begin{theorem}[main result of \cite{NSVZ11}]
\label{th:main:Nisan-et-al}
Best-response dynamics of every NBR-solvable game $G$ converge to a pure Nash equilibrium of the game and, if $G$ has clear outcome, are incentive compatible. Moreover, convergence is guaranteed in $\ell_G$ rounds for any schedule, where a round is a sequence of consecutive time steps in which each player is selected for update at least once.
\end{theorem}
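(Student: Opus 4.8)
The plan is to treat the two claims---convergence and incentive compatibility---separately, building both on a single structural \emph{confinement invariant} along the elimination sequence. Fix a shortest elimination sequence $G=G_0\supset G_1\supset\cdots\supset G_r=\hat G$ with $r=\ell_G$, and say the dynamics is \emph{confined to $G_k$} at a given time if no player is currently playing a strategy that has been eliminated in the first $k$ steps. The first step is to show that confinement to $G_k$ is \emph{preserved} by best responses: if every player plays a $G_k$-surviving strategy and some players are selected to best-respond, then each responder avoids the strategies it has already eliminated, because those are NBR in the corresponding earlier subgame and hence never a best response against any $G_k$-confined (a fortiori earlier-subgame-confined) profile. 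Concurrent updates cause no difficulty here, since each responder best-responds to the profile at the \emph{beginning} of the step, which is $G_k$-confined by assumption.

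Next I would prove the progress step by induction: confinement to $G_k$ at the start of a round implies confinement to $G_{k+1}$ at its end. Within the round the designated player $i^{(k)}$ is selected at least once; at that moment the others are $G_k$-confined, so $i^{(k)}$'s best response cannot be one of the strategies eliminated in passing from $G_k$ to $G_{k+1}$, which places $i^{(k)}$ in $G_{k+1}$; preservation then keeps the whole profile $G_{k+1}$-confined for the remainder of the round (the other players' strategy sets are unchanged between $G_k$ and $G_{k+1}$). Starting from the trivially $G_0$-confined initial profile, after $r=\ell_G$ rounds the dynamics is confined to $\hat G$, which is a single profile $\hat\s$. That $\hat\s$ is a Nash equilibrium follows because every alternative strategy of any player was eliminated as NBR in some $G_j$ and is therefore not a best response against the surviving profile $\hat\s_{-i}$ (which is $G_j$-confined since $\hat G\subset G_j$), so the surviving strategy is the unique best response; uniqueness of the equilibrium is immediate as $\hat\s$ is the only surviving profile. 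This establishes convergence in $\ell_G$ rounds for \emph{any} schedule.

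For incentive compatibility under a clear outcome, I would fix a potential deviator $i$ and use the player-specific elimination sequence promised by Definition~\ref{def:clear-outcome}, in which $i$ first appears at some position $k$. The crucial observation is that the first $k$ eliminations are performed by players other than $i$, and their NBR character in $G_0,\dots,G_{k-1}$ holds against \emph{every} strategy of $i$ (player $i$'s full strategy set still survives in $G_k$). Consequently, as long as all players other than $i$ best-respond, they become---and remain---confined to $G_k$ after $k$ rounds \emph{no matter what $i$ does}. From that point every realized profile has the others in $G_k$ and $i$ in $S_i$, i.e.\ it is a profile of $G_k$, and the clear-outcome hypothesis says the $u_i$-maximizing profile of $G_k$ is exactly the Nash equilibrium $\hat\s$. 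Hence $u_i(X^t)\le u_i(\hat\s)$ for all large $t$, so any deviation yields total utility at most $u_i(\hat\s)$ (for both the terminating and the $\limsup$ definitions), whereas best-responding drives the dynamics to $\hat\s$ by the convergence argument and attains $u_i(\hat\s)$ exactly. Thus no unilateral deviation is profitable.

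I expect the incentive-compatibility step to be the main obstacle, for two reasons. First, one must argue carefully that the confinement of the non-deviating players to $G_k$ is genuinely \emph{independent} of $i$'s (arbitrary, possibly adversarial) play---this is precisely where it matters that $i$ does not appear among $i^{(0)},\dots,i^{(k-1)}$ and that $S_i$ is intact in $G_k$. Second, the total-utility definitions force attention on the long run rather than on transient payoffs: a deviation may momentarily raise $i$'s payoff during the first $k$ rounds, and one has to check that this is washed out because for all sufficiently large $t$ the profile is pinned inside $G_k$, where $\hat\s$ is $u_i$-optimal. The convergence half, by contrast, reduces cleanly to the preservation-plus-progress induction sketched above.
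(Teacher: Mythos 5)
Your proposal is correct, but there is nothing in the paper to compare it against line-by-line: Theorem~\ref{th:main:Nisan-et-al} is stated as an imported result, quoted from Nisan et al.~\cite{NSVZ11} without proof, and the paper's own contributions begin where this theorem leaves off (the imperfect case $p>0$). That said, your argument is essentially the original argument of \cite{NSVZ11}, and---more relevantly here---it is the exact deterministic skeleton of the machinery this paper builds for the noisy setting: your preservation step is the $p=0$ case of inequality \eqref{eq:chain:stay-pne} in Lemma~\ref{le:progress-probability} (with no mistakes the profile never leaves $G_k$), your progress step is the $p=0$, $\varepsilon=0$ case of \eqref{eq:progress-probability}, your induction along the elimination sequence is Lemma~\ref{le:hit:general}, and your incentive-compatibility argument (the confinement of the non-deviators to $G_k$ is independent of $i$'s play, and the clear-outcome condition pins the $u_i$-optimum of $G_k$ at the equilibrium) is precisely the intuition the paper formalizes probabilistically in Theorem~\ref{th:p-BR:incentive-compatible}. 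Two small points deserve care, and you flag both: under the terminating definition of total utility one must rule out that termination occurs during the transient (the first $k$ rounds), where the deviator might enjoy a payoff above $u_i(\hat\s)$; and your parenthetical claim that uniqueness of the equilibrium is ``immediate'' in fact needs the standard extra induction that any pure Nash equilibrium survives every elimination step (its strategies are best responses against a profile lying in each subgame, hence never NBR), though uniqueness is not required by the statement being proved.
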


Note that convergence and incentive compatibility holds regardless of the schedule of players. Moreover, the theorem implies that for a specific $(R,\varepsilon)$-fair schedule the dynamics converges in $O(R \cdot \ell_G)$ time steps. Note also that convergence does \emph{not} require a clear outcome and this condition is only needed for incentive compatibility.

\section{Convergence properties}
\subsection{A negative result}
In this section we will show that the result about convergence of the best-response dynamics in NBR-solvable games given in~\cite{NSVZ11} is not resistant to the introduction of ``noise'', i.e.,
there is a NBR-solvable games and an imperfect best-response dynamics that never converges, except with small probability, to the Nash equilibrium even for values of $p$ very small.
Specifically we will prove the following theorem.
\begin{theorem}\label{th:negative}
For every $0 < \delta < 1$, there exists a $n$-player NBR-solvable game $G$ and an imperfect best-response dynamics with parameter $p$ exponentially small in $n$ such that for every integer $t > 0$ the dynamics converges after $t$ steps with probability at most $\delta$.
\end{theorem}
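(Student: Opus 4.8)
The plan is to build a single game together with a single imperfect best-response dynamics and to derive the impossibility from a tension between two time scales: the number of rounds needed to \emph{recover} the equilibrium after one mistake, and the number of rounds between two mistakes. I would make recovery \emph{exponentially} slow, so that a per-step mistake probability $p$ that is exponentially small in $n$ is still frequent enough to keep the process away from the equilibrium at every time $t$.

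First I would construct an NBR-solvable game $G$ on $n$ players whose shortest elimination sequence has length $\ell_G = 2^{\Omega(n)}$ and whose elimination order is \emph{forced}: the best responses encode the increments of an $n$-bit counter, one increment per round, so that the unique surviving profile $\s^\ast$ (the unique pure Nash equilibrium) is the ``saturated'' state reached only after a full count-up. I would then add two gadgets. A designated player, say player $1$, is given a strategy $b$ that is a never best-response in $G$ but whose effect, once played, is to make the low strategy the unique best response of every other player, thus \emph{resetting} the counter to $0$ from any configuration; this is obtained by letting player $i$'s payoff strongly reward the low strategy whenever player $1$ plays $b$. The schedule is fixed to the round-robin order that realizes exactly one counter-increment per round, which is $(n,0)$-fair, and the initial profile $\s^0$ is the reset configuration. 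The response rule lets player $1$ play $b$ with probability exactly $p$ each time she is selected, and best-respond otherwise, while every other player always best-responds; since $b$ is a never best-response, playing it is always a deviation, so this rule is $p$-imperfect.

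The analysis is then short. By the forcing property, from the reset configuration the perfect dynamics reaches $\s^\ast$ only after at least $\ell_G$ rounds of uninterrupted best responses, and it stays at $\s^\ast$ only until player $1$'s next play of $b$. Hence the event $\{X^t = \s^\ast\}$ implies that player $1$ played $b$ at \emph{no} round among the last $\ell_G$ rounds before $t$: any such play resets the counter and leaves fewer than $\ell_G$ rounds for the count-up to finish, so the counter cannot be saturated at time $t$ (for $t$ smaller than $\ell_G$ rounds the claim is immediate, since even the perfect dynamics started at $\s^0$ has not yet saturated). Because the round-robin schedule selects player $1$ exactly once per round, she faces $\ell_G$ independent deviation opportunities in this window, and therefore
\[
\Pr\left(X^t = \s^\ast\right) \;\le\; \Pr\left(\text{player }1\text{ never plays }b\text{ in the last }\ell_G\text{ rounds}\right) \;=\; (1-p)^{\ell_G}.
\]
Choosing $p = \lceil \ln(1/\delta)\rceil/\ell_G$ makes the right-hand side at most $e^{-\lceil \ln(1/\delta)\rceil} \le \delta$ for every $t$, while $\ell_G = 2^{\Omega(n)}$ makes this $p$ exponentially small in $n$, as required.

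The main obstacle is the construction itself: realizing an NBR-solvable game whose elimination sequence is simultaneously exponentially long and \emph{forced} to advance by only $O(1)$ steps per round under the chosen schedule, which is what supplies the \emph{lower} bound of $\ell_G$ rounds on recovery (Theorem~\ref{th:main:Nisan-et-al} only gives the matching \emph{upper} bound), together with a one-move global reset whose target $b$ remains a never best-response. A secondary point is to keep the displayed bound rigorous once deviations are allowed throughout recovery: one must verify that \emph{every} play of $b$, not only one starting from $\s^\ast$, postpones the first hitting of $\s^\ast$ by at least $\ell_G$ rounds, and, if one prefers a randomized $(R,\varepsilon)$-fair schedule to round-robin, that the fairness slack $\varepsilon$ only replaces the exponent by $\lfloor \ell_G R'/R \rfloor$ and $p$ by $(1-\varepsilon)p$ for a suitable effective window, leaving the same conclusion.
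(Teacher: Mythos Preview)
Your approach has a concrete obstruction that you do not address. By definition an elimination sequence removes one NBR strategy of one player at each step, so for any game $G$ with $n$ players and at most $m$ strategies per player one has $\ell_G \le n(m-1)$ (this is stated explicitly in the paper just after the definition of elimination sequences). Your ``$n$-bit counter'' description implies binary strategies, hence $\ell_G \le n$, which is incompatible with your requirement $\ell_G = 2^{\Omega(n)}$. More importantly, Theorem~\ref{th:main:Nisan-et-al} guarantees that the perfect best-response dynamics converges in $\ell_G$ rounds \emph{from any profile and for any schedule}; so once player~$1$ stops playing $b$, the dynamics reaches $\s^\ast$ in at most $\ell_G$ rounds, and your lower bound $(1-p)^{\ell_G}$ on the non-convergence probability cannot be made small unless $\ell_G$ itself is exponential. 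Achieving that would force at least one player to have exponentially many strategies, which is not what your counter encoding describes and which you would then need to combine with a forced elimination order and a global-reset NBR---none of which is substantiated. In short, the ``exponentially slow recovery'' you need is ruled out by the very upper bound of Theorem~\ref{th:main:Nisan-et-al} unless the strategy sets are exponentially large.

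The paper places the exponential elsewhere. It uses a simple $n$-player two-strategy game with $\ell_G = n$ (player $i$ prefers $1$ iff all $j<i$ play $1$), and instead designs an adversarial $(2^{n-1},0)$-fair schedule $\sigma_n$ of length $2^{n-1}$ with the propagation property that between any two occurrences of player $i$ some player $j>i$ is scheduled. A single mistake by any player then cascades to player $n$ with probability $(1-q)^n$, so the probability that player $n$ plays $1$ at the end of one schedule cycle is at most $(1-p)^{2^{n-1}} + 2^{n-1} p\,(1-(1-q)^n)$, and taking $p \approx (1-\delta)/(\delta\,2^{n-1})$ and $q$ small yields the bound $\delta$. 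The contrast is that you try to make the \emph{game} slow, whereas the paper makes the \emph{schedule} long; the latter is consistent with the $\ell_G \le n(m-1)$ constraint and also connects to the subsequent positive result, which is stated in terms of $R$.
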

We highlight that this theorem does not state just that the probability of being in the Nash equilibrium at any time $t$ is low, but that the probability that the the system has ever been in the Nash equilibrium at any time before $t$ is low.

\paragraph{The game.}
Consider the following game: there are $n$ players with two strategies $0$ and $1$. For each player $i$, we define the utility function of $i$ as follows: if players $1, \ldots, i-1$ are playing $1$, then $i$ prefers to play $1$ regardless of the strategies played by players $i+1, \ldots, n$, i.e., $u_i(\1_{1 \cdots i-1},0,\s_{i+1 \cdots n}) < u_i(\1_{1 \cdots i-1},1,\s_{i+1 \cdots n})$ for each strategy profile $\s$, otherwise $i$ prefers to play $0$, i.e. $u_i(\s_{-i},1) < u_i(\s_{-i},0)$ for any $\s$ such that $\s_{1 \cdots i-1} \neq \1_{1 \cdots i-1}$.

It is easy to check that the above game is NBR-solvable. Indeed, the elimination sequence consists of players $1,2,\ldots, n$ eliminating strategy $0$ one-by-one in this order (note that $1$ is a dominant strategy for player $1$ and, more in general, strategy $1$ is dominant for $i$ in the subgame in which all players $1,\ldots, i-1$ have eliminated $0$). The subgame $\hat G$ consists of the unique pure Nash equilibrium that is the  profile $\mathbf 1 = (1,\ldots, 1)$.

\paragraph{The $p$-imperfect response rule.}
All players play the following $p$-imperfect response rule:
\begin{itemize}
 \item Player $i$ chooses strategy $0$ with probability $p$ if all players $j<i$ are playing strategy $1$;
 \item Player $i$ chooses strategy $0$ with probability $1 - q$ if at least one player $j<i$ is playing strategy $0$, where $0 \leq q \ll p$.
\end{itemize}

\paragraph{The $(2^{n-1},0)$-fair schedule.}
Let us start by defining sequences $\sigma_i$, with $i = 1, \ldots, n$, recursively as follows
$$
 \sigma_1=1, \quad  \sigma_2 = 12, \quad
 \sigma_3 = 1213, \quad  \ldots \quad
 \sigma_i = \sigma_{i-1} \sigma_{i-2} \cdots \sigma_1 i.
$$
Observe that each sequence has length $2^{i-1}$. Then players are scheduled one at a time according to $\sigma_n$ and then repeat.
A key observation about the schedule is in order.
\begin{obs}\label{obs:adversary}
 Between any two  occurrences of player $i < n$ there is an occurrence of a player $j\geq i+1$.
\end{obs}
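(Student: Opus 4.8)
The plan is to prove the statement by induction on the recursive construction of the sequences $\sigma_i$, exploiting two elementary structural facts. First, each block $\sigma_j$ \emph{ends} with the symbol $j$, which is immediate from $\sigma_j = \sigma_{j-1}\sigma_{j-2}\cdots\sigma_1 j$. Second, the symbols occurring in $\sigma_j$ are exactly the players in $\{1,\ldots,j\}$, so that inside $\sigma_m = \sigma_{m-1}\sigma_{m-2}\cdots\sigma_1 m$ a fixed player $k$ occurs in the block $\sigma_j$ precisely when $j \ge k$. In particular, the blocks of $\sigma_m$ that contain $k$ form the \emph{contiguous} run $\sigma_{m-1}, \sigma_{m-2}, \ldots, \sigma_k$. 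Since the infinite schedule is $\sigma_n$ repeated, it suffices to prove the claim for two consecutive occurrences of $k$ inside one copy of $\sigma_n$ and then to dispose of the wrap-around between consecutive copies.

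I would prove the following slightly stronger statement by induction on $m$: for every $k < m$, between any two consecutive occurrences of player $k$ in $\sigma_m$ there is an occurrence of some player $j > k$. The base cases $m \le 2$ are vacuous, since no player $k < m$ occurs twice. For the inductive step I take two consecutive occurrences of $k$ in $\sigma_m$ and split according to whether they lie in the same block $\sigma_j$ or in two distinct blocks. If both lie in the same $\sigma_j$, then necessarily $k < j$ (the symbol $j$ occurs only once in $\sigma_j$, at its end) and $j < m$, so the induction hypothesis applied to $\sigma_j$ produces the desired intervening player $>k$.

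The genuinely informative case, and the one I expect to be the crux, is when the two occurrences fall in distinct blocks. Using the contiguity of the $k$-containing blocks, if both lie in the same copy then the first occurrence must be the \emph{last} occurrence of $k$ inside some block $\sigma_a$ with $a > k$, and the next occurrence lies in the immediately following block $\sigma_{a-1}$ (the remaining possibility, that the first occurrence is the unique $k$ in the terminal $k$-block $\sigma_k$, has no successor within the same copy and is precisely the wrap-around case below). Because $\sigma_a$ terminates with the symbol $a$ and $a \ne k$, this last occurrence of $k$ sits strictly before the end of $\sigma_a$; hence the terminal symbol $a > k$ of $\sigma_a$ lies strictly between the two consecutive occurrences, which is exactly what is required.

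Finally, the wrap-around is handled by the same terminal-symbol idea: the last occurrence of $k < n$ inside a copy of $\sigma_n$ sits at the end of the block $\sigma_k$ and is therefore followed, within that copy, by the suffix $\sigma_{k-1}\cdots\sigma_1 n$, whose final symbol is player $n > k$. Thus between the last occurrence of $k$ in one copy and its first occurrence in the next copy the largest player $n$ always intervenes. As a sanity check, one can verify that the symbol at position $t$ of $\sigma_n$ equals $1 + v_2(t)$, the ruler sequence given by the $2$-adic valuation of $t$: the occurrences of player $k$ are then the positions divisible by $2^{k-1}$ but not by $2^k$, and the midpoint between two consecutive such positions is divisible by $2^k$, forcing a strictly larger player there, which recovers the claim.
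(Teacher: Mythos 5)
Your proof is correct, and in fact it supplies more than the paper does: Observation~\ref{obs:adversary} is stated in the paper without any proof at all, as a fact to be checked directly from the definition of the sequences $\sigma_i$, so there is no ``paper proof'' to compare against. Your argument is a complete and sound substitute. The structural facts you isolate (each block $\sigma_j$ ends in the symbol $j$ and contains exactly the symbols $\{1,\ldots,j\}$, hence the blocks of $\sigma_m$ containing a fixed player $k<m$ are the contiguous prefix $\sigma_{m-1},\ldots,\sigma_k$) are exactly what is needed; the case split in the induction (same block, adjacent blocks, wrap-around between copies of $\sigma_n$) is exhaustive, with the terminal-symbol argument correctly producing an intervening player $a>k$ or $n>k$ in the last two cases; and restricting attention to \emph{consecutive} occurrences is legitimate, since a larger player between consecutive occurrences also lies between any enclosing pair. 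One remark: your closing ``sanity check'' is actually the shortest self-contained proof and could replace the induction entirely. The repeated schedule places at time $t$ the player $1+\min(v_2(t),n-1)$, where $v_2(t)$ is the $2$-adic valuation of $t$; hence player $k<n$ is scheduled exactly at the times $t$ with $v_2(t)=k-1$, two consecutive such times differ by $2^k$, and their midpoint is divisible by $2^k$, so it hosts a player of index at least $k+1$. This one-line version also handles the wrap-around between copies of $\sigma_n$ automatically, with no case analysis.
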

Intuitively speaking, this property causes any \emph{bad move} of some player in the sequence $\sigma_n$ to propagate  to the last player $n$, where by ``bad move'' we mean that at time $t$ the corresponding player $\sigma_n(t)$ plays strategy $0$ given that each player $j < \sigma_n(t)$ plays $1$ (thus, a bad move occurs with probability $p$).

\begin{proof}[Proof (of Theorem~\ref{th:negative})]
Throughout the proof we will denote $2^{n-1}$ as $\tau$ for sake of readability.

Let $X_t$ be the random variable that represents the profile of the game at step $t$. We will denote with $X_t^n$ the $n$-th coordinate of $X_t$, i.e. the strategy played by player $n$ at time $t$. Suppose that player $n$ plays $0$ at the beginning. Then, for every $t < \tau$, the probability that at time step $t$ the game is in a Nash equilibrium is obviously $0$. Consider now $t \geq \tau$. The probability that at time step $t$ the game is in a Nash equilibrium is obviously less than the probability that $X^n_t = 1$. Hence it will be sufficient to show that $\Prob{}{X_t^n = 1} \leq \delta$. Note that $X^n_t = X^n_{c\tau}$, $c$ being the largest integer such that $t \geq c \cdot \tau$. Since both the response rule and the schedule described above are memoryless, for every profile $\s$
$$
 \Prob{}{X^n_{c\tau} = 1 \mid X_{(c-1)\tau} = \s} = \Prob{}{X^n_{\tau} = 1 \mid X_0 = \s}.
$$
Let us use $\Prob{\s}{X_{\tau}^n = 1}$ as a shorthand for $\Prob{}{X^n_{\tau} = 1 \mid X_0 = \s}$. Moreover, let $\overline{B}$ denote the event that no bad move occurs in the interval $[1, \tau]$ and let $B_t$ denote the event that the first bad move occurs at time $t \in \{1,\ldots, \tau\}$. Then
$$
 \Prob{\s}{X_{\tau}^n = 1} = \Prob{\s}{X_{\tau}^n = 1 \mid \overline{B}} \Prob{}{\overline{B}} + \sum_{t=1}^\tau \Prob{\s}{X_{\tau}^n = 1 \mid B_t} \Prob{}{B_t}.
$$
Note that $B_t$ has probability at most $p$ and $\overline{B}$ has probability $(1-p)^{\tau}$. Obviously, $\Prob{}{X_{\tau}^n = 1 \mid \overline{B}} = 1$.
Moreover, by Observation~\ref{obs:adversary}, given a bad move of player $i_0 \neq n$ at time $t_{i_0}$, there is a sequence of time steps $t_{i_0} < t_{i_1}<t_{i_2}< \cdots < t_n$ such that player $i_j > i_{j-1}$ is selected at time $t_{i_j}$ and it is not selected further before $t_{i_{j+1}}$.
Therefore, player $i_1$ plays $0$ at time $t_{i_1}$ with probability $1-q$ because at that time $i_0$ is still playing $0$. Similarly, if player $i_j$ at time $t_{i_{j+1}}$ is still playing $0$, then player $i_{j+1}$ will play $0$ with probability $1-q$. Hence,
$$
 \Prob{}{X_{\tau}^n \neq 1 \mid B_t} \geq (1-q)^n.
$$
Then
$$
 \Prob{}{X_{\tau}^n = 1} \leq (1-p)^\tau + \tau p (1 - (1-q)^n) \leq \frac{1}{1+p\tau} + p\tau \frac{q}{1-q},
$$
where we repeatedly used that $1-x \leq e^{-x} \leq (1+x)^{-1}$.

The theorem follows by taking $p = \frac{1 - \delta}{\delta \cdot 2^{n-1}}$ and $q$ sufficiently small.
\end{proof}

\begin{remark}[BGP games]
\label{remark:bgp}
 The proof of the above theorem can be adapted to the class of BGP games~\cite{LSZ11,NSVZ11}. In these games, we are given an undirected graph with $n$ nodes (the players) and an additional destination  node $d$. The strategy of each player is to choose the \emph{next hop} in the routing towards $d$, that is, to point to one of its neighbors in the graph. Each node has a preference order over all possible paths (and ``non-valid paths'' have naturally the lowest rank).
 
 We consider the following instance with $n + 2$ players and the destination node $d$:
\begin{center}
 \begin{tikzpicture}[-,auto,on grid=true,semithick,
                     prof/.style={shape=circle,draw,inner sep=0pt,minimum size=6mm},
                     every label/.style={font=\scriptsize}]
  \node[prof] (A) {$n$};
  \node[prof] (B) [right=1.5cm of A] {$n$-$1$};
  \node[prof] (C) [right=2.5cm of B] {$2$};
  \node[prof] (D) [right=1.5cm of C] {$1$};
  \node[prof] (E) [right=1.5cm of D] {$0$};
  \node[prof] (F) [right=1.5cm of E] {$d$};
  \node[prof] (G) [below=2cm of C] {$a$};
  
  \draw (A.east) -- (B.west);
  \draw [dotted] (B.east) -- (C.west);
  \draw (C.east) -- (D.west);
  \draw (D.east) -- (E.west);
  \draw (E.east) -- (F.west);
  
  \draw (A.south east) -- (G.west);
  \draw (B.south east) -- (G.north west);
  \draw (C.south) -- (G.north);
  \draw (D.south west) -- (G.north east);
  \draw (F.south west) -- (G.east);
\end{tikzpicture}
\end{center}
Moreover, we specify the relationships among players according to the Gao-Rexford  model \cite{GR01}, as follows:
each node $i$, with $i = 1, \ldots, n-1$ is a provider for node $i+1$ and a customer of node $i - 1$;
moreover, node $a$ is a provider for each node $i = 1, \ldots, n$.

The so-called filtering policy of the Gao-Rexford model (that dictates that each node forwards traffic from one of its customers or to one of its customers only) implies that in this instance the only available path for nodes $0$ and $a$ is the one that routes directly to $d$. As for the player $i = 1,2,\ldots, n$, they are left with two available choice: we denote by $0$ the strategy to route to node $a$ and by $1$ the strategy to route to node $i - 1$. We can then set the preferences of player $i$ over the paths as follows:
\begin{itemize}
 \item Node $i$'s top ranked path is $i\rightarrow i-1\rightarrow \cdots \rightarrow 1 \rightarrow 0 \rightarrow d$.
 \item Node $i$'s second top ranked path is $i \rightarrow a \rightarrow d$.
 \item Every other path has lower rank.
\end{itemize}
The reader can check that in this instance we have that:
\begin{enumerate}
 \item For any strategy profile $\s$, $$u_i(\1_{1 \cdots i-1},0,\s_{i+1 \cdots n}) < u_i(\1_{1 \cdots i-1},1,\s_{i+1 \cdots n}),$$  
 by definition of $i$'s top ranked path above.
 \item For any $\s$ such that $\s_{1 \cdots i-1} \neq \1_{1 \cdots i-1}$ $$u_i(\s_{-i},1) < u_i(\s_{-i},0),$$
 by definition of $i$'s second top ranked path above.
\end{enumerate}
\end{remark}

\begin{remark}[logit dynamics]
 It is interesting that we can instantiate the abstract game described in the proof with utilities:
\begin{equation}
 u_i(0,\s_{-i}) = 0 \ \ \ \mbox{ and } \ \ \ u_i(1,\s_{-i})  = \begin{cases}
                     1, & \mbox{if $s_1=\cdots=s_{i-1}=1$}; \\
		    -L, & \mbox{otherwise;}
                    \end{cases}
\end{equation}
%
where $L$ is a large number. Similarly, the response rule described above may be instantiated as a logit response rule with noise $\beta$, that corresponds to set
\[p=\frac{1}{1+e^{\beta}} \ \ \ \mbox{ and } \ \ \
 q = \frac{e^{-\beta L}}{e^{\beta} + e^{-\beta L}}.
\]
Therefore our lower bound also applies to logit dynamics (for a suitable game and a suitable noise parameter $\beta$).
\end{remark}

The proof of Theorem~\ref{th:negative} highlights that imperfect dynamics differ from the perfect ones, since convergence result should necessarily depend on the schedule of players. Specifically, a closer look at the proof of Theorem~\ref{th:negative} shows that non-convergence of the imperfect best-response dynamics requires setting $p \approx \frac{1}{R}$ or greater. As a consequence, it may be possible to prove convergence to the equilibrium only by taking $p$ being smaller than $1/R$.

\subsection{A positive result (convergence time)}
Given the negative result above, we wonder whether there are values of $p$ for which the convergence of perfect best-response mechanisms is restored. The following theorem states that this occurs when $p$ is small with respect to parameters $R, \eta$ and $\ell$.
\begin{theorem}\label{th:convergence_solvable}
 For any NBR-solvable game $G$ and any small $\delta > 0$ an imperfect best-response dynamics converges to the Nash equilibrium of $G$ in $O(R \cdot \ell \log \ell)$ steps with probability at least $1 - \delta$, whenever $p \leq \frac{c}{\eta R \cdot \ell \log \ell}$, for a suitably chosen constant $c = c(\delta)$.
\end{theorem}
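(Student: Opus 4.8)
The plan is to reduce the imperfect dynamics to the perfect one by conditioning on the event that, during the relevant time window, no selected player ever fails to play a best response. Since the schedule is non-adaptive (hence independent of the players' response randomness), I would couple the imperfect dynamics with the perfect best-response dynamics run on the \emph{same} realization of the schedule, coupling the response randomness so that the two trajectories coincide on any prefix on which no mistake occurs. On such a prefix the imperfect dynamics is literally a perfect best-response dynamics, and Theorem~\ref{th:main:Nisan-et-al} guarantees that it reaches the unique Nash equilibrium of $G$ within $\ell$ rounds. It therefore suffices to fix a horizon $T = O(R\,\ell\log\ell)$ and show that, with probability at least $1-\delta$, two good events occur simultaneously: (i) the schedule completes $\ell$ full rounds within the first $T$ steps, and (ii) no mistake is made during those $T$ steps.

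For event (i), I would use the $(R,\varepsilon)$-fairness of the schedule together with the amplification remark following Definition~\ref{def:fair-selection}: a single round (every player selected at least once) fails to complete within $R\cdot\lceil \log(1/\delta')/\log(1/\varepsilon)\rceil$ steps with probability at most $\delta'$. Taking $\delta' = \delta/(2\ell)$ and a union bound over the $\ell$ rounds shows that all $\ell$ rounds complete within $T = \ell\cdot R\cdot\lceil \log(2\ell/\delta)/\log(1/\varepsilon)\rceil = O(R\,\ell\log\ell)$ steps with probability at least $1-\delta/2$, where $\varepsilon$ (bounded away from $1$) and $\delta$ are absorbed into the constants. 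Note that the $\log\ell$ factor in the horizon arises precisely from this union bound over rounds; for a deterministic fair schedule ($\varepsilon=0$) each round costs exactly $R$ steps and the factor disappears.

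For event (ii), recall that in each time step at most $\eta$ players are selected, and each selected player fails to best respond with probability at most $p$. Even though the response rule may depend on the history, a union bound over the at most $\eta T$ response events requires only this per-selection bound, and yields a total mistake probability of at most $\eta\, p\, T = O(\eta\, p\, R\,\ell\log\ell)$. Choosing $p \le \frac{c}{\eta R\,\ell\log\ell}$ with $c = c(\delta)$ small enough forces this quantity to be at most $\delta/2$. A final union bound combining (i) and (ii) gives success probability at least $1-\delta$, and on the good event the equilibrium is reached within the first $T = O(R\,\ell\log\ell)$ steps, which is exactly the notion of convergence used here.

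The step I expect to be the crux is the coupling/conditioning of the first paragraph: I must argue that restricting to the ``no mistake in $[1,T]$'' event does not distort the trajectory in a way that would break the appeal to Theorem~\ref{th:main:Nisan-et-al}. This is where non-adaptivity of the schedule is essential—the schedule randomness is independent of the response randomness, so on the no-mistake event every selected player plays exactly the best response and the realized trajectory is identical to a perfect best-response run on the realized schedule, to which Theorem~\ref{th:main:Nisan-et-al} applies verbatim. A secondary bookkeeping point is that events (i) and (ii) must be controlled under the \emph{same} horizon $T$ so that a single choice of $c(\delta)$ and $T = O(R\,\ell\log\ell)$ makes both contributions at most $\delta/2$.
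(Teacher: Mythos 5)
Your proof is correct, but it takes a genuinely different route from the paper's. The paper never conditions on a global ``no mistake'' event and never invokes Theorem~\ref{th:main:Nisan-et-al} as a black box; instead it re-runs the elimination-sequence argument with error probabilities attached: Lemma~\ref{le:progress-probability} bounds the probability of staying in a subgame $G_k$ and of progressing from $G_k$ to $G_{k+1}$ within one fairness window, and Lemma~\ref{le:hit:general} iterates this over the $\ell$ stages; the theorem then follows by instantiating Lemma~\ref{le:hit:general} with $k=\ell$ and the amplified parameters $(T,\delta/2\ell)$. Quantitatively the two arguments are the same union bound --- both pay $\eta p$ per selection plus the fairness failure probability per window --- so your horizon $O(R\,\ell\log\ell)$ and constraint on $p$ come out identical. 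Two substantive differences are worth noting. First, your event (i) needs \emph{every} player to be selected in each of $\ell$ rounds, whereas the paper's per-phase argument needs only the designated eliminating player $i^{(k)}$ to be selected in phase $k$; under the formal, per-player reading of Definition~\ref{def:fair-selection} (the $\Pr(SEL_{i,a,R})\geq 1-\varepsilon$ form), your claim about a full round requires an extra union bound over players. This is asymptotically harmless --- only players with at least two strategies matter, and there are at most $\ell$ of them, so the factor inside the logarithm becomes $\ell^2$ rather than $\ell$ --- but it should be stated. Second, the paper's failure event is ``some selected player plays an NBR strategy'', a subset of your ``some selected player fails to best respond''. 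This refinement is pure slack for the present theorem, but it is precisely what makes the paper's lemmas reusable in the incentive-compatibility analysis (Lemma~\ref{lemma:prob_in_equilibrium} and the remark following it), where one player deviates from the rule \emph{at every step}: there your global no-mistake event has probability zero, so the black-box reduction to Theorem~\ref{th:main:Nisan-et-al} cannot be recycled. In short, your argument is more modular and self-contained for this statement; the paper's phase-by-phase machinery is heavier but is built to be reused later.
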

The following two lemmas represent the main tools in the proof of the theorem. Both lemmas hold for NBR-solvable games as for the more general class of NBR-reducible games. Moreover, in both lemmas we denote with $X_t$ the random variable that represents the profile of the game after $t$ steps of an imperfect best-response dynamics. Note also that, for an event $E$ we denote with $\Prob{\s}{E}$ the probability of the event $E$ conditioned on the initial profile being $X_0 = \s$, i.e., $\Prob{\s}{E}=\Prob{}{E\mid X_0=\s}$. 

\begin{lemma}\label{le:progress-probability}
For any profile $\s$, we have
 \begin{align}
 \Prob{\s}{X_{t+s} \in G_k \mid X_s \in G_k} & \geq 1 - \eta p t,
\label{eq:chain:stay-pne}
 \\
\label{eq:progress-probability}
 \Prob{\s}{X_{R+s} \in G_{k+1} \mid X_s \in G_k} & \geq 1 -\eta p R  - \epsilon.
\end{align}
\end{lemma}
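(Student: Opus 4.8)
The plan is to isolate a single structural fact about best responses in the elimination sequence and then combine it with two union bounds, one controlling the response rule and one controlling the schedule. Both inequalities concern the behaviour of the dynamics relative to the nested subgames $G_k \supset G_{k+1}$, so I would first make precise what it means to ``stay'' or ``progress'' purely in terms of best-responding.

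The structural core is the following claim, which is essentially the property underlying Theorem~\ref{th:main:Nisan-et-al}: if the current profile $\s$ lies in $G_k$, then the best response of \emph{any} player to $\s_{-i}$ is again a strategy of $G_k$, and moreover the best response of the distinguished player $i^{(k)}$ (the one eliminating strategies in the step $G_k \supset G_{k+1}$) lies in $G_{k+1}$. I would prove this directly from the definition of NBR: a strategy eliminated at a stage $j$ is, by definition, never a best response against any profile of the remaining players in $G_j$; since the current $\s_{-i}$ lies in $G_k \subseteq G_j$, that strategy is strictly beaten by some alternative at $\s_{-i}$ and hence cannot be the best response (ties being handled by the tie-breaking/perturbation convention fixed in Section~\ref{sec:defs}). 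Two consequences I would record are: (i) as long as every updating player best responds, a profile in $G_k$ remains in $G_k$, which holds even under concurrent updates since all selected players respond to the same profile; and (ii) once $i^{(k)}$ has played a strategy of $G_{k+1}$, continued best responses keep the profile in $G_{k+1}$, because $G_k$ and $G_{k+1}$ differ only in the strategy set of $i^{(k)}$, whose best response from $G_{k+1}\subseteq G_k$ again lands in $G_{k+1}$.

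For inequality~\eqref{eq:chain:stay-pne} I would bound the complementary event that the dynamics leaves $G_k$ during $[s+1,s+t]$. By consequence (i), leaving $G_k$ forces at least one updating player to play a non-best response. In each of the $t$ steps at most $\eta$ players are selected, giving at most $\eta t$ update events, each a non-best response with probability at most $p$; a union bound then gives probability at most $\eta p t$ that any such event occurs, and conditioning on $X_s \in G_k$ is precisely the regime in which (i) applies, yielding $1-\eta p t$. For inequality~\eqref{eq:progress-probability} I would intersect two good events over $[s+1,s+R]$: event $A$, that $i^{(k)}$ is selected at least once (guaranteed with probability at least $1-\epsilon$ by $(R,\epsilon)$-fairness), and event $B$, that no updating player ever plays a non-best response (probability at least $1-\eta p R$ by the same union bound, valid for every schedule realization). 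On $A\cap B$ the dynamics stays in $G_k$ throughout by (i), so at the first selection of $i^{(k)}$ it best responds into $G_{k+1}$, and by (ii) it remains there until time $s+R$; hence $X_{s+R}\in G_{k+1}$. A final union bound gives $\Pr(\overline{A}\cup\overline{B}) \leq \epsilon + \eta p R$.

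I expect the main obstacle to be the structural claim, and in particular verifying the ``stay in $G_{k+1}$'' part (consequence (ii)) together with the fact that both consequences survive concurrent updates and a history-dependent response rule; the two probability estimates are then routine union bounds. The only delicate point there is that event $A$ (schedule) and the estimate for $B$ (responses) must be combined without assuming independence, which is why I would phrase the $B$-bound as holding conditionally on each realization of the non-adaptive schedule, so that the final union bound is legitimate.
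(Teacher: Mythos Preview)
Your proposal is correct and follows essentially the same route as the paper: both arguments reduce to the observation that leaving $G_k$ (respectively, failing to enter $G_{k+1}$ once $i^{(k)}$ has been selected) forces some updating player to play a NBR, then apply a union bound over the at most $\eta t$ update events and, for the second inequality, combine with the $(R,\epsilon)$-fairness guarantee on $SEL_{i^{(k)},s,R}$. You are more explicit than the paper in spelling out the structural claim (your consequences (i) and (ii)) and in handling concurrent updates and the lack of independence between schedule and responses, but the underlying argument is the same.
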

\begin{proof}
Let the dynamics be in $G_k$ at time $s$ and observe that if the dynamics is not in $G_{k}$ at time $t+s$, then in one of steps in the interval $[s+1,s+t]$ some selected player played a NBR. Since at every step at most $\eta$ players are selected, \eqref{eq:chain:stay-pne} follows from the union bound.

Similarly, if the dynamics is not in $G_{k+1}$ at time $t+s$ given that player $i^{(k)}$ has been selected for update at least once during the interval $[s+1,s+t]$, then in one of these time steps some selected player played a NBR. Hence,
 \[
\Prob{\s}{X_{t+s} \not \in G_{k+1} \mid X_s \in G_k \cap SEL_{i^{(k)},s,t}} \leq \eta t p.
\]
Now simply observe that the definition of conditional probabilities and inequality above imply the following two inequalities:
\begin{align*}
\Prob{\s}{X_{t+s} \not \in G_{k+1} \mid X_s \in G_k} & \leq \Prob{\s}{X_{t+s} \not \in G_{k+1} \mid X_s \in G_k \cap SEL_{i^{(k)},s,t}}  + 1 - \Pr(SEL_{i^{(k)},s,t})\\
 & \leq \eta t p + 1 - \Pr(SEL_{i^{(k)},s,t}).
\end{align*}
Since $\Pr(SEL_{i^{(k)},s,R}) \geq 1 - \varepsilon$ by definition of imperfect best-response dynamics, the lemma follows.
\end{proof}

\begin{lemma}\label{le:hit:general}
For any profile $\s$  and $1 \leq k \leq \ell$, we have
$$
\Prob{\s}{X_{kR} \in G_k} \geq 1 - k \cdot (\eta p R + \varepsilon).
$$
\end{lemma}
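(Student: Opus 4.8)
The plan is to prove the lemma by induction on $k$, exploiting the nested structure $G_0 \supset G_1 \supset \cdots$ of the elimination sequence together with the per-round progress bound \eqref{eq:progress-probability} of Lemma~\ref{le:progress-probability}. The underlying picture is that each block of $R$ consecutive steps should push the dynamics one level deeper into the elimination sequence, from $G_k$ down into $G_{k+1}$, and that a single block can only fail to do so in two ways: some selected player plays a NBR (which costs at most $\eta p R$ by the union bound in the proof of \eqref{eq:progress-probability}) or the eliminating player $i^{(k)}$ is never scheduled in the block (which costs at most $\varepsilon$ by fairness). Accumulating these failure probabilities over $k$ successive blocks is exactly what yields the additive bound $k(\eta p R + \varepsilon)$. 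I would make this precise by introducing the events $E_j=\{X_{jR}\in G_j\}$ and proving $\Prob{\s}{E_k}\ge 1-k(\eta pR+\varepsilon)$.

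For the base case $k=1$, note that $X_0=\s$ lies in $G_0=G$ with probability $1$, so applying \eqref{eq:progress-probability} with $s=0$ immediately gives $\Prob{\s}{E_1}\ge 1-\eta pR-\varepsilon$. For the inductive step I would invoke \eqref{eq:progress-probability} at starting time $s=kR$ to get the one-block estimate $\Prob{\s}{E_{k+1}\mid E_k}\ge 1-(\eta pR+\varepsilon)$, and then chain it with the induction hypothesis:
$$
\Prob{\s}{E_{k+1}} \ge \Prob{\s}{E_{k+1}\cap E_k} = \Prob{\s}{E_{k+1}\mid E_k}\,\Prob{\s}{E_k} \ge \bigl(1-(\eta pR+\varepsilon)\bigr)\bigl(1-k(\eta pR+\varepsilon)\bigr),
$$
after which the elementary inequality $(1-a)(1-ka)\ge 1-(k+1)a$ (valid since $ka^2\ge 0$) closes the induction. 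An equivalent and arguably cleaner route bypasses the product entirely: define the per-block failure events $D_j=\{X_{(j-1)R}\in G_{j-1}\}\cap\{X_{jR}\notin G_j\}$, observe that $\Prob{\s}{D_j}\le\eta pR+\varepsilon$ directly from \eqref{eq:progress-probability}, and note that if none of $D_1,\dots,D_k$ occur then, since $X_0\in G_0$ always holds, the dynamics is carried successively into $G_1,G_2,\dots,G_k$; a union bound over the $D_j$ then gives the claim.

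The main point demanding care is the \emph{chaining} of these conditional progress estimates, because the response rule is permitted to depend on the entire past history, so the dynamics need not be Markovian and one cannot appeal to a transition-probability argument. The resolution is that Lemma~\ref{le:progress-probability} is stated and proved purely as a conditional bound given the current event $X_s\in G_k$: its proof rests only on a union bound over the at most $\eta R$ opportunities to play a NBR and on the fairness guarantee, neither of which uses memorylessness. Hence \eqref{eq:progress-probability} is legitimately applicable at $s=kR$ conditioned on $E_k$ irrespective of how $E_k$ arose, which is precisely what both the product chaining and the $D_j$ union bound require. A final, purely bookkeeping remark is that the multiplicative chaining should only be performed while the bounds stay nonnegative; if $k(\eta pR+\varepsilon)\ge 1$ at some stage the asserted inequality is vacuous (probabilities are nonnegative), so it holds trivially and the induction may be assumed to run in the nontrivial regime.
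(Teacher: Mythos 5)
Your proposal is correct and follows essentially the same route as the paper: the paper's proof is an induction on $k$ that iterates the additive decomposition $\Prob{\s}{X_{kR}\notin G_k}\le \Prob{\s}{X_{kR}\notin G_k\mid X_{(k-1)R}\in G_{k-1}}+\Prob{\s}{X_{(k-1)R}\notin G_{k-1}}\le (\eta p R+\varepsilon)+\Prob{\s}{X_{(k-1)R}\notin G_{k-1}}$ via \eqref{eq:progress-probability}, which is exactly your $D_j$ union-bound formulation unrolled. Your primary multiplicative chaining with $(1-a)(1-ka)\ge 1-(k+1)a$ is only a cosmetic variant of this, slightly less direct since the additive version needs neither the product inequality nor the caveat about the vacuous regime.
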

\begin{proof}
 Observe that
 \begin{align*}
  \Prob{\s}{X_{kR} \notin G_k} & \leq \Prob{\s}{X_{kR} \notin G_{k} \mid X_{(k-1)R} \in G_{(k-1)R}} + \Prob{\s}{X_{(k-1)R} \notin G_{(k-1)R}}\\
  & \leq \eta pR + \varepsilon + \Prob{\s}{X_{(k-1)R} \notin G_{(k-1)R}},
 \end{align*}
where the first inequality follows from to the definition of conditional probabilities and the last one uses \eqref{eq:progress-probability}. Since $\Prob{\s}{X_{0} \notin G_{0}} = 0$ the lemma follows by iterating the argument.
\end{proof}

\begin{proof}[Proof (of Theorem~\ref{th:convergence_solvable})]
 Consider an interval $T$ of length $R \cdot \left\lceil \frac{\log (2\ell/\delta)}{\log(1/\varepsilon)}\right\rceil$. As discussed above, the probability that all players are selected at least once in an interval of length $T$ is $\frac{\delta}{2\ell}$. That is, every $(R,\epsilon)$-fair adversary is also $(T,\delta/2\ell)$-fair. The theorem thus follows from Lemma~\ref{le:hit:general} with $k=\ell$, $R=T$, $\varepsilon=\delta/2\ell$, and $p \leq \frac{\delta}{2} \cdot \frac{1}{\eta T \ell}$.
\end{proof}

\section{Incentive compatibility property}
In this section we ask if the incentive compatibility property holds also in presence of noise, that is, if deviating from a $p$-imperfect best-response rule is not beneficial for the player. Note that adopting a $p'$-imperfect response rule, with $p' < p$, should be not considered a deviation, since this rule is also a $p$-imperfect response rule.

\subsection{A negative result}
In this section we show that the incentive compatibility condition is not robust against noise and mistakes. Indeed, the following theorem says that,  even for arbitrarily small $p>0$, there are games for which $p$-imperfect best response dynamics are no longer incentive compatible (though best-response dynamics are incentive compatible according to the main result of \cite{NSVZ11} -- see Theorem~\ref{th:main:Nisan-et-al} above).

\begin{theorem}
\label{th:ic-negative}
 For any $p>0$, there exists a NBR-solvable game with clear outcome and a $p$-imperfect best-response dynamics whose response rule is not incentive compatible.
\end{theorem}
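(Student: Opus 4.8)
The plan is to exhibit an explicit two-player instance, fix a concrete imperfect best-response dynamics on it, and then name a single profitable deviation for one player. Let player~$1$ have a strictly dominant strategy~$1$, so that its only off-equilibrium behaviour comes from mistakes, and let player~$2$ have two strategies $A,B$ with payoffs arranged so that $B$ is the myopic best response exactly when player~$1$ plays the off-equilibrium strategy~$0$, while $A$ is strictly better whenever player~$1$ plays~$1$. Concretely I would set $u_2(1,A)=1$, $u_2(1,B)=0$, $u_2(0,A)=-M$, $u_2(0,B)=-M+\epsilon'$ with $M$ large and $\epsilon'>0$ infinitesimal, and choose $u_1$ so that $1$ dominates and $(1,A)$ is player~$1$'s globally preferred profile. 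The unique Nash equilibrium is $(1,A)$.

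First I would verify the hypotheses that make the \emph{perfect} dynamics incentive compatible, so that the failure is genuinely attributable to noise. The elimination sequence ``delete player~$1$'s strategy~$0$, then delete player~$2$'s strategy~$B$'' shows the game is NBR-solvable with reduced profile $(1,A)$; note that $B$ is \emph{not} a never-best-response in $G_0$ (it answers~$0$) but becomes one once~$0$ is removed, which is precisely the mechanism of the trap. The clear-outcome condition holds: when player~$2$ first eliminates, the subgame is $\{1\}\times\{A,B\}$ and $(1,A)$ maximises $u_2$, while when player~$1$ first eliminates, $(1,A)$ is its global optimum. Hence Theorem~\ref{th:main:Nisan-et-al} applies and perfect best response is incentive compatible here.

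Next I would fix the dynamics: the $p$-imperfect best-response rule for both players together with a \emph{randomised} schedule selecting one player uniformly at random per step (which is $(R,\varepsilon)$-fair with $R=O(n\log n)$). The randomised schedule makes the induced Markov chain aperiodic, so that $\limsup_t E[u_2(X^t)]$ coincides with the stationary expectation; this is what lets me compare genuine long-run averages rather than a single favourable phase of a periodic chain. The candidate deviation for player~$2$ is the constant rule ``always play $A$''. This is a legitimate deviation rather than a weaker $p'$-imperfect rule, since it fails to best respond with probability~$1$ whenever player~$1$ plays~$0$.

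The core of the proof is then an accounting argument over each ``mistake event''. Whenever player~$1$ mistakes to~$0$ (which happens at rate $\Theta(p)$ when it is selected), a best-responding player~$2$ is drawn to $B$; when player~$1$ reverts to~$1$ she is momentarily stuck at $(1,B)$, paying $u_2(1,A)-u_2(1,B)=1$ per step until she is next selected, whereas the deviating player~$2$ sits at $(1,A)$ throughout the recovery (and also avoids her own occasional mistakes under the prescribed rule, which only strengthens the comparison). The sole compensation for best-responding is the gain $\epsilon'$ per step during the mistake itself, which I make negligible by choice of $\epsilon'$. Summing over an event, best response loses $\Theta(1)$ relative to the deviation, so in stationary (equivalently, $\limsup$) terms $\Gamma_2^{\mathrm{dev}}-\Gamma_2^{\mathrm{prescribed}}=\Theta(p)>0$. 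I expect the main obstacle to be exactly this last step: controlling the $\limsup$ of the expected utility tightly enough to fix the \emph{sign} of an $O(p)$ gap. The two delicate ingredients are (i) using a randomised schedule so that the $\limsup$ is a true time-average rather than the best phase of a periodic chain, which could spuriously favour best response, and (ii) designing the payoffs so the transient benefit of chasing the mistake is dominated by the cost of recovering from it, uniformly in~$p$.
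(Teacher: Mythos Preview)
Your approach is sound and takes a genuinely different route from the paper. The paper works with a $2\times 2$ \emph{potential} game under \emph{logit} dynamics and invokes Blume's closed-form Gibbs stationary distribution; the profitable deviation there is for the column player to always play the \emph{non-equilibrium} action, exploiting a very large payoff $L$ at the off-equilibrium corner $(\tbottom,\tright)$. Your construction is in a sense dual: the deviator \emph{sticks to} the equilibrium action $A$, and the profit arises not from a large off-equilibrium reward but from avoiding the recovery cost paid by a best-responder who chases player~$1$'s mistakes into $B$ and is then stranded at $(1,B)$. The paper's computation is shorter --- a one-line inequality once the Gibbs formula is in hand --- while yours is more elementary in that it needs no potential-game machinery, but your accounting sketch does have to be made precise to pin down the sign of an $O(p)$ gap. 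This is routine: for the mutation rule (mistake with probability exactly $p$) under the uniform one-player schedule, the $4$-state chain has stationary probabilities $\pi(1,B)=\tfrac{1}{2}p(1-p)(3-2p)$ and $\pi(0,B)=\tfrac{1}{2}p(1-p)(1+2p)$, and since player~$1$'s marginal does not depend on player~$2$'s rule the $-M$ contributions cancel exactly, leaving
\[
\Gamma_2^{\mathrm{dev}}-\Gamma_2^{\mathrm{prescribed}}=\pi(1,B)-\epsilon'\,\pi(0,B)=\tfrac{1}{2}p(1-p)\bigl[(3-2p)-(1+2p)\epsilon'\bigr]>0
\]
whenever $\epsilon'<\tfrac{3-2p}{1+2p}$. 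A side benefit of your route is that a single game (e.g.\ $\epsilon'=\tfrac14$, and $M$ irrelevant) witnesses the failure for every $p\in(0,1)$, whereas the paper's parameter $L$ must grow like $1/p$.
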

\begin{proof}
Consider the following game $G$ with clear outcome (the gray profile)

\begin{center}
\begin{game}{2}{2}
      & left         & right \\
    top  & \pne{$L+2,1$}    & $1,0$ \\
    bottom  & $0,0$    & $0,L$

\end{game}\hspace*{\fill}%
\end{center}
where $L$ is a sufficiently large constant (to be specified below). We examine what happens when both players play according to a particular $p$-imperfect best response dynamics, namely the logit dynamics with parameter $\beta$ (see Appendix~\ref{apx:logit} for details).
It turns out that the column player can improve her utility by playing \emph{always} strategy ``right'' (thus deviating to a $1$-imperfect best response).

We first use a known result by \cite{blumeGEB93} to compute the expected utilities of logit dynamics for this game. The above game is a \emph{potential game} and the potential function\footnote{A game is a \emph{potential game} if there exists a function $\Phi$
such that, for all $i$, for all  $\s_{-i}$, and for all $s_i,s_i'$, it holds that
$
 \Phi(s_i,\s_{-i}) - \Phi(s_i',\s_{-i})  = u_i(s_i',\s_{-i}) - u_i(s_i,\s_{-i}).
$ Such function $\Phi$ is called a \emph{potential function} of the game.} $\Phi$ is given by the following table:

\begin{center}
\begin{game}{2}{2}
      & left         & right \\
    top  & $L+2$    & $L+1$ \\
    bottom  & $0$    & $L$

\end{game}\hspace*{\fill}%
\end{center}
For potential games, the logit dynamics converges to the following distribution $\pi$ on the set of profiles:
\[
\pi(\s)= \frac{e^{-\beta \Phi(\s)}}{\sum_{\s'\in S} e^{-\beta \Phi(\s')}}.
\]
Hence, the expected utility of the column player when she plays according to the logit response rule is

\begin{equation}
\label{eq:expected-payoff:best-response}
 \frac{1 \cdot e^{\beta (L+2)} + L \cdot e^{\beta L}}{1 + e^{\beta L} + e^{\beta(L+1)} + e^{\beta(L+2)}} < \frac{e^{2\beta} + L}{1 + e^\beta + e^{2\beta}}.
\end{equation}
If instead the column player  plays always strategy ``right'', then her expected payoff is determined by the logit dynamics on the corresponding subgame. The same potential argument above (note that $\Phi$ is also a potential for this subgame) says that the column player has expected payoff equal to

\begin{equation}
\label{eq:expected-payoff:non-BR}
 \frac{L \cdot e^{\beta L}}{e^{\beta L} + e^{\beta(L+1)}} = \frac{L}{1 + e^\beta}.
\end{equation}
Since the right-hand side of \eqref{eq:expected-payoff:best-response} is smaller than \eqref{eq:expected-payoff:non-BR} for  $L\geq 1 + e^\beta$, the theorem follows from the observation that, in games with two strategies and integer payoffs, logit dynamics are $p_\beta$-imperfect with $p_\beta= \frac{1}{1+e^{\beta}}$ (see Section~\ref{sec:defs} and Appendix~\ref{apx:logit} for details).
\end{proof}

\subsection{A positive result}
As done for convergence, we now investigate for sufficient conditions for incentive compatibility. We will assume that utilities are non-negative: note that there are a lot of response rules that are invariant with respect to the actual value of the utility function and thus, in these cases, this assumption is without loss of generality.
Recall that we denote as $i^{(k)}$ and $G_k$ the first occurrence of the player and the corresponding subgame in the elimination sequence given by the definition of game with clear outcome (Definition~\ref{def:clear-outcome}).

It turns out that we need a ``quantitative'' version of the definition of clear outcome, i.e., that whenever the player $i$ has to eliminate a NBR her utility in the Nash equilibrium is sufficiently larger than the utility of any other profile in the subgame she is actually playing.
Specifically, we have the following theorem.
\begin{theorem}\label{th:p-BR:incentive-compatible}
 For any NBR-solvable game $G$ with clear outcome and any small $\delta > 0$, playing according to a $p$-imperfect rule is incentive compatible for player $i=i^{(k)}$ as long as $p \leq \frac{c}{\eta R \cdot \ell \log \ell}$, for a suitable constant $c=c(\delta)$, the dynamics run for $\Omega\left(R \cdot \ell \log \ell\right)$ and
 $$
  u_i(NE) \geq \frac{1}{1 - 2\delta} \bigg(2 \delta \cdot\max (u_i,G) + \max\left(u_i,G^{(k)}\right)\bigg),
 $$
 where $u_i(NE)$ is the utility of $i$ in the Nash equilibrium, $\max(u_i,G^{(k)}) = \max_{\s \in G^{(k)}} u_i(\s)$ and $\max(u_i,G) = \max_{\s \in G} u_i(\s)$.
\end{theorem}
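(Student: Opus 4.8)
The plan is to fix a termination time $T = \Theta(R \cdot \ell \log \ell)$ (as permitted by the hypothesis that the dynamics runs for $\Omega(R\cdot\ell\log\ell)$ steps) and to compare the total utility $\Gamma_i = E[u_i(X_T)]$ that player $i=i^{(k)}$ obtains by following the prescribed $p$-imperfect rule against the total utility she obtains under an arbitrary deviating rule, while every other player keeps following the prescribed rule. I would prove two matching estimates — a lower bound $\Gamma_i^{\text{prescribed}} \geq (1-2\delta)\,u_i(NE)$ and an upper bound $\Gamma_i^{\text{dev}} \leq 2\delta\,\max(u_i,G) + \max(u_i,G^{(k)})$ — after which the hypothesized inequality of the theorem, rewritten as $(1-2\delta)\,u_i(NE) \geq 2\delta\,\max(u_i,G)+\max(u_i,G^{(k)})$, immediately yields $\Gamma_i^{\text{prescribed}} \geq \Gamma_i^{\text{dev}}$, i.e. incentive compatibility.

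For the lower bound I would invoke the convergence machinery exactly as in Theorem~\ref{th:convergence_solvable}. Using Lemma~\ref{le:hit:general} with $k=\ell$, the dynamics reaches the unique profile $G_\ell = \{NE\}$ within $O(R\ell\log\ell)$ steps with probability at least $1-\delta$, and using \eqref{eq:chain:stay-pne} it then remains at $NE$ until time $T$ with probability at least $1-\delta$; choosing the constant $c=c(\delta)$ in the bound $p \leq c/(\eta R \ell \log \ell)$ small enough makes both error terms at most $\delta$, so $\Prob{}{X_T = NE} \geq 1-2\delta$. Since utilities are non-negative, dropping every non-equilibrium contribution gives $\Gamma_i^{\text{prescribed}} = E[u_i(X_T)] \geq (1-2\delta)\,u_i(NE)$.

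For the upper bound I would first observe that a deviation of $i$ does not disturb the players preceding $i$ in the clear-outcome (player-specific) elimination sequence of Definition~\ref{def:clear-outcome}: whether a strategy of $i^{(j)}$ is a NBR in $G_j$ for $j<k$ is a property of the subgame $G_j$ (which still contains all of $i$'s strategies) and is independent of how $i$ actually plays. Hence Lemma~\ref{le:hit:general} and \eqref{eq:chain:stay-pne}, applied to the first $k\le \ell$ eliminations, guarantee that the game is reduced to $G_k$ at time $T$ with probability at least $1-2\delta$, regardless of $i$'s rule. On the complementary event I bound $u_i$ by the global maximum $\max(u_i,G)$, which contributes at most $2\delta\,\max(u_i,G)$. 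The crux is the main event: I claim that on $\{X_T \in G_k\}$ a \emph{genuine} deviation — one in which $i$ plays a strategy that is NBR in $G_k$ at the final decision — confines the profile to the subgame $G^{(k)}$ in which $i$ plays such a NBR strategy, so that $u_i(X_T) \leq \max(u_i,G^{(k)})$. Combining the two events gives $\Gamma_i^{\text{dev}} \leq 2\delta\,\max(u_i,G) + \max(u_i,G^{(k)})$.

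The main obstacle is exactly this confinement step, where the clear-outcome hypothesis does the real work and which must be argued carefully for adaptive rules. The point is that the only way for $i$ to leave $G^{(k)}$ toward the equilibrium is to play her surviving (non-NBR) strategy at the end — but that is precisely the prescribed action, so such a ``deviation'' agrees with following at the last step and, by the same convergence argument, cannot exceed the prescribed utility (it reaches $NE$, which by Definition~\ref{def:clear-outcome} is the maximizer of $u_i$ over $G_k$). Thus no deviation beats following: persistent-NBR deviations are capped at $\max(u_i,G^{(k)})$, strictly below $u_i(NE)$, and are dominated through the displayed inequality, while deviations that ultimately play the surviving strategy merely reproduce the prescribed outcome. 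This is also consistent with the counterexample of Theorem~\ref{th:ic-negative}, where the clear-outcome gap $u_i(NE) - \max(u_i,G^{(k)})$ is too small to absorb the $2\delta\,\max(u_i,G)$ error term, so the condition of the theorem fails and the manipulating (always-NBR) deviation becomes profitable.
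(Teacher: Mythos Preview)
Your overall two--bound strategy and its ingredients coincide with the paper's proof: the lower bound $\Gamma_i^{\text{prescribed}}\ge(1-2\delta)\,u_i(NE)$ via the convergence machinery (packaged in the paper as Corollary~\ref{cor:nash}), the upper bound via the key observation that $i$'s rule cannot affect eliminations $1,\dots,k-1$ so that $\Pr_{\s}[X_t\notin G_k]\le 2\delta$ still holds under any deviation (this is exactly Lemma~\ref{lemma:prob_in_equilibrium} together with the Remark following it), and then the hypothesised inequality to compare the two.

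The divergence is in how you finish the upper bound on the event $\{X_T\in G_k\}$. In the paper $G^{(k)}$ is simply $G_k$: once $\Pr_{\s}[X_t\notin G_k]\le 2\delta$ is established, the bound $E[u_i(X_t)]\le 2\delta\max(u_i,G)+\max(u_i,G_k)$ is immediate from $u_i(X_T)\le\max(u_i,G_k)$ on that event, and the proof ends there in one line. Your extra ``confinement'' paragraph --- interpreting $G^{(k)}$ as the set of profiles in which $i$ plays an NBR strategy and case--splitting between ``genuine'' deviations and those ending in the surviving strategy --- is not in the paper and is not needed. It also contains a doubtful step: a deviation that happens to play the surviving strategy at its last update does \emph{not} ``merely reproduce the prescribed outcome'' or ``reach $NE$'', because players $i^{(k+1)},\dots,i^{(\ell)}$ downstream of $i$ in the elimination sequence may have been thrown off by $i$'s earlier NBR plays and need not be at their equilibrium strategies at time $T$. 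The paper sidesteps this entirely by taking the trivial bound over all of $G_k$; the clear--outcome hypothesis is used only to guarantee that $k$ is $i$'s \emph{first} appearance in the player--specific elimination sequence (so that the first $k$ eliminations, and hence Lemma~\ref{lemma:prob_in_equilibrium}, are unaffected by $i$'s rule), not to pin down where inside $G_k$ the final profile lies.
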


We can summarize the intuition behind the proof of Theorem~\ref{th:p-BR:incentive-compatible} as follows:
\begin{itemize}
 \item If player $i$ always updates according to the $p$-imperfect response rule, then the game will be in the Nash equilibrium with high probability and hence her expected utility almost coincides with the Nash equilibrium utility;
 \item Suppose, instead, player $i$ does not update according to a $p$-imperfect response rule. Notice that elimination of strategies up to $G_k$ is not affected by what player $i$ does. Therefore profiles of $G \setminus G_k$ will be played only with small probability (but $i$ can gain the highest possible utility from these profiles), whereas the game will be in a profile of $G_k$ with the remaining probability.
\end{itemize}
Let us now formalize this idea. We start with the following lemma.
\begin{lemma}\label{lemma:prob_in_equilibrium}
 For any profile $\s$,  any $1 \leq k \leq \ell$ and  $t \geq kR$, we have
$$
\Prob{\s}{X_{t} \in G_k} \geq 1 - \eta p \cdot (t - l kR) - k \cdot (\eta p R + \varepsilon),
$$
where $l$ is the largest integer such that $t \geq l kR$.
\end{lemma}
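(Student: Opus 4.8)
The plan is to assemble the two preceding lemmas through a \emph{re-anchoring} argument. The naive route—reach $G_k$ by time $kR$ via Lemma~\ref{le:hit:general} and then stay there all the way up to $t$ via \eqref{eq:chain:stay-pne}—would cost a staying term $\eta p\,(t-kR)$ that grows without bound as $t\to\infty$. Instead I would reset the hitting clock at $lkR$, the most recent multiple of $kR$ not exceeding $t$, and only pay for remaining in $G_k$ over the short residual window $[lkR,t]$, whose length $t-lkR$ is strictly less than $kR$. The definition of $l$ as the largest integer with $t\ge lkR$ is exactly what makes $lkR\le t$ (so the anchor lies in the past) while $(l-1)kR\ge0$ (so the shifted hitting bound starts from a nonnegative time, using $l\ge1$, which holds because $t\ge kR$).

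First I would establish that Lemma~\ref{le:hit:general} may be restarted from time $(l-1)kR$, i.e.
$$
\Prob{\s}{X_{lkR}\in G_k}\ge 1-k(\eta p R+\varepsilon).
$$
This follows by running the $k$-fold iteration from the proof of Lemma~\ref{le:hit:general} beginning at time $(l-1)kR$ rather than at $0$: the base case $X_{(l-1)kR}\in G_0$ is automatic since $G_0=G$ contains every profile, and each of the $k$ progress steps invokes \eqref{eq:progress-probability} with $s=(l-1)kR+(j-1)R$ for $j=1,\dots,k$, so that after exactly $kR$ further steps the process is in $G_k$ at time $lkR$.

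Next I would control the probability of \emph{leaving} $G_k$ during the residual window by applying \eqref{eq:chain:stay-pne} with $s=lkR$ over $t-lkR$ steps, giving
$$
\Prob{\s}{X_{t}\in G_k \mid X_{lkR}\in G_k}\ge 1-\eta p\,(t-lkR).
$$
Combining the two estimates through the elementary bound $\Prob{\s}{X_t\notin G_k}\le \Prob{\s}{X_{lkR}\notin G_k}+\Prob{\s}{X_t\notin G_k\mid X_{lkR}\in G_k}$ yields failure probability at most $\eta p\,(t-lkR)+k(\eta p R+\varepsilon)$, which is precisely the claim.

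The only genuinely delicate point—and the step I expect to be the main obstacle—is justifying the re-anchoring despite the response rule being allowed to depend on the whole past history. This is legitimate because Lemmas~\ref{le:progress-probability} and~\ref{le:hit:general} are derived solely from the union bound over per-step NBR probabilities (each at most $p$, over at most $\eta$ players) together with the shift-invariant fairness of Definition~\ref{def:fair-selection}; none of these ingredients references the initial time, so the conditional dynamics from any time onward obeys the very same inequalities. I would note in passing that this observation in fact gives the slightly stronger $\Prob{\s}{X_t\in G_k}\ge 1-k(\eta p R+\varepsilon)$ for all $t\ge kR$ (iterating directly from $s_0=t-kR$), of which the stated bound is a convenient weakening; the extra term $\eta p\,(t-lkR)$ is simply what the anchor-plus-stay route pays, and it is harmless for the incentive-compatibility estimate that follows.
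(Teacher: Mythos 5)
Your proof is correct and is essentially the paper's own argument: the same decomposition of $\Prob{\s}{X_t \notin G_k}$ at time $lkR$, the same application of \eqref{eq:chain:stay-pne} with $s=lkR$ over the residual window of length $t-lkR$, and the same re-anchoring of Lemma~\ref{le:hit:general} at time $(l-1)kR$, which the paper states tersely as $\Prob{\s}{X_{lkR} \notin G_k} = \Prob{\s'}{X_{kR} \notin G_k} \leq k(\eta p R + \varepsilon)$ for $\s' = X_{(l-1)kR}$. Your explicit justification that the restart is legitimate (the union-bound and fairness ingredients are time-shift invariant, and $l\geq 1$ makes the anchor nonnegative) only spells out what the paper leaves implicit, so this is the same approach, not a different one.
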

\begin{proof}
We have
$$
 \Prob{\s}{X_{t} \notin G_k} \leq \Prob{\s}{X_{t} \notin G_k \mid X_{l kR} \in G_k} + \Prob{\s}{X_{l kR} \notin G_k}.
$$
From Lemma~\ref{le:progress-probability} we have
$$
 \Prob{\s}{X_{t} \notin G_k \mid X_{l kR} \in G_k} \leq \eta p \cdot (t - l kR).
$$
For $\s' = X_{(l-1)kR}$  Lemma~\ref{le:hit:general} implies
\[
 \Prob{\s}{X_{l kR} \notin G_k} = \Prob{\s'}{X_{kR} \notin G_k} \leq k \cdot (\eta p R + \varepsilon). \tag*{\qed}
\]
\let\qed\relax
\qed \end{proof}

\begin{remark}
 Observe that Lemma~\ref{lemma:prob_in_equilibrium} holds even if only players $i^{(1)}, \ldots, i^{(k)}$ are updating according to a $p$-imperfect response rule.
\end{remark}

By taking $T$ and $p$ as in the proof of Theorem~\ref{th:convergence_solvable} and by applying Lemma~\ref{lemma:prob_in_equilibrium} with $k=\ell$ and $(R,\varepsilon) = (T,\delta/2\ell)$ we have the following corollary.
\begin{cor}
\label{cor:nash}
 For any $\delta > 0$ and for any starting profile $\s$, if $t = \Omega\left(R \cdot \ell \log \ell\right)$, then
$$
 \Prob{\s}{X_{t} \in \hat G} \geq 1 - 2\delta.
$$
\end{cor}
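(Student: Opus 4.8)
The plan is to read the corollary off Lemma~\ref{lemma:prob_in_equilibrium} evaluated at the very bottom of the elimination sequence, reusing verbatim the schedule rescaling and the choice of $p$ from the proof of Theorem~\ref{th:convergence_solvable}. Since $G$ is NBR-solvable, the last subgame $G_\ell = \hat G$ consists of the single surviving profile, which is exactly the Nash equilibrium; hence the event $\{X_t \in \hat G\}$ coincides with $\{X_t \in G_\ell\}$, and it suffices to lower-bound $\Prob{\s}{X_t \in G_\ell}$.

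First I would amplify fairness exactly as in Theorem~\ref{th:convergence_solvable}: setting $T = R \lceil \log(2\ell/\delta)/\log(1/\varepsilon)\rceil$, the remark following Definition~\ref{def:fair-selection} makes every $(R,\varepsilon)$-fair schedule also $(T,\delta/2\ell)$-fair, and I would take $p \leq \frac{\delta}{2}\cdot\frac{1}{\eta T\ell}$ (which, since $T=O(R\log\ell)$, is of the form $\frac{c}{\eta R\,\ell\log\ell}$). I would then invoke Lemma~\ref{lemma:prob_in_equilibrium} with $k=\ell$ and with $(R,\varepsilon)$ replaced by $(T,\delta/2\ell)$, obtaining
$$
\Prob{\s}{X_t \in \hat G} \geq 1 - \eta p\,(t - l\ell T) - \ell\,(\eta p T + \tfrac{\delta}{2\ell}),
$$
where $l$ is the largest integer with $t \geq l\ell T$.

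Then I would bound the two error terms separately. For the hitting term, the choice of $p$ gives $\ell\,\eta p T \leq \delta/2$, while $\ell\cdot\frac{\delta}{2\ell} = \delta/2$, so this term is at most $\delta$ (this is precisely the bound behind Lemma~\ref{le:hit:general}). For the staying term, the key observation is that, by maximality of $l$, the residual satisfies $t - l\ell T < \ell T$ \emph{no matter how large} $t$ is; hence $\eta p\,(t - l\ell T) < \eta p\,\ell T \leq \delta/2$. Combining yields $\Prob{\s}{X_t \in \hat G} \geq 1 - \tfrac{3}{2}\delta \geq 1 - 2\delta$, as claimed.

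Finally I would verify the hypothesis on $t$: the lemma needs $t \geq \ell T$ (equivalently $l \geq 1$), and since $\delta,\varepsilon$ are constants in $\ell$ we have $\ell T = O(R\,\ell\log\ell)$, so the assumption $t = \Omega(R\,\ell\log\ell)$ guarantees $t \geq \ell T$ and matches the stated running time. I expect the only real subtlety to be the bookkeeping around $l$: one must use its maximality to keep the staying error uniformly $O(\delta)$ for \emph{all} large $t$, which is exactly the feature that upgrades the one-shot hitting estimate (Lemma~\ref{le:hit:general}) into a bound that persists indefinitely once the dynamics has reached $\hat G$.
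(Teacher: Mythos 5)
Your proof is correct and takes essentially the same route as the paper: the paper's own one-line proof is precisely to take $T$ and $p$ as in the proof of Theorem~\ref{th:convergence_solvable} and apply Lemma~\ref{lemma:prob_in_equilibrium} with $k=\ell$ and $(R,\varepsilon)=(T,\delta/2\ell)$. You have simply spelled out the bookkeeping the paper leaves implicit, namely the three $\delta/2$ error bounds and the use of the maximality of $l$ to control the staying term $\eta p\,(t-l\ell T)$ uniformly in $t$.
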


\begin{proof}[Proof (of Theorem~\ref{th:p-BR:incentive-compatible})]
From Corollary~\ref{cor:nash}, the expected utility of $i$, given that all players are playing according to the $p$-imperfect response rule, will be at least $(1 - 2\delta) \cdot u_i(NE)$.

Suppose now that $i$ does not play a $p$-imperfect response rule. Similarly as done above, we let $T = R \cdot \left\lceil \frac{\log (2k/\delta)}{\log(1/\varepsilon)}\right\rceil$ and then, by applying Lemma~\ref{lemma:prob_in_equilibrium} with $(R,\varepsilon) = (T,\delta/2k)$ we obtain
$$
 \Prob{\s}{X_{t} \notin G_k} \leq 2 \delta.
$$
Hence, the expected utility of $i$ will be at most $2\delta\max(u_i,G) + \max(u_i,G^{(k)})$ and the theorem follows.
\end{proof}

\section{NBR-reducible Games}
For general games it is not possible to prove convergence to a pure Nash equilibrium without making additional assumptions on the schedule and on the response rule. Such negative result applies also to NBR-reducible games, the natural extension of NBR-solvable ones. 
However, we shall see below that, for NBR-reducible games, several questions on the dynamics of a game $G$ can be answered by studying the dynamics of the reduced game $\hat G$.

\subsection{Impossibility of general results}\label{subsec:impossible}
At first sight NBR-solvable games may appear as a limited class of games. The class of NBR-reducible games is a natural generalization that can be applied to more settings.  Unfortunately, we next show that no general result can be stated about the convergence of $p$-imperfect best-response dynamics. Specifically, we will show that the system behaves differently not only with respect to which schedule is adopted, but also with respect to how a $p$-imperfect response rule is implemented.

\paragraph{Same response rule and different schedules.}
Consider the classical  $2$-player coordination game:
$$
\begin{game}{3}{2}
      & 0     & 1\\
0   &1,1    &0,0\\
1   &0,0    &1,1
\end{game}
$$
Obviously, the game above is not NBR-solvable but it may be the reduced game for a NBR-reducible game. We assume players update according to the best-response response rule, but we consider two different schedules: the first one select just one player randomly at each time step, whereas the second one update all players at same time. Then, it is easy to see that the dynamics with the first schedule always converges in one of the two Nash equilibria of the game, $(0,0)$ and $(1,1)$, whereas with the second one it can cycle over profiles $(1,0)$ and $(0,1)$ and never converges (see e.g. Alos-Ferrer and Netzer~\cite{afnGEB10}).

\paragraph{Same schedule and different response rules.}
Consider the following $2$-player game:
$$
\begin{game}{3}{2}
      & 0     & 1\\
0   &0,0    &0,1\\
1   &0,1    &1,0
\end{game}
$$
The game has an unique Nash equilibrium, namely $(1,0)$, but it is not NBR-solvable. We consider two different response rules: in both players update according to the best-response rule, but they differ in how handling multiple best-response strategies. The first response rule states that one of these strategies is selected randomly (as happen in logit response rule), whereas the second response rule choose the best response randomly only if the current strategy is not a best response. In these cases, the second response rule adopts a conservative approach and chooses the current strategy (this is exactly the way the mutation model handle multiple best responses). Let us pair these response rules with the schedule that select just one player randomly at each time step. Then, it is easy to see that the dynamics with the second response rule always converges to the Nash equilibrium and never move from there, whereas with the first one it cycles infinitely over profiles $(1,0), (0,0), (0,1)$ and $(1,1)$ (see e.
g. Alos-Ferrer and Netzer~\cite{afnGEB10}).

\subsection{Reductions between games}\label{sec:reductions}
We will see now that some of the ideas developed in the previous sections about NBR-solvable games and their pure Nash equilibria can be extended to address questions about NBR-reducible games. In particular, we will see that for a wide class of questions about imperfect best-response dynamics for a NBR-reducible game $G$, an answer can be given simply by considering a restriction of this dynamics to the reduced game $\hat G$.

Before formally stating this fact, let us introduce some useful concepts.

\paragraph{The dynamics as a Markov chain.}
Recall that, in general, the dynamics is not a Markovian process (for instance, the schedule or the response rule of the players could depend on the history -- the past strategy profiles). 
In order to state our results in full generality, we allow both the schedule and the response rule to depend on what we call the \emph{status} of the dynamics, that is, on a set of information in addition to the  current strategy profile.

We say that the dynamics is in a \emph{status--profile pair} $(h,\s)$ if $h$ is the set of information currently available and $\s$ is the profile currently played. We denote with $H$ the set of all status--profile pairs $(h,\s)$ and with $\hat H$ only the ones with $\s \in \hat G$. Let $X_t$ be the random variable that represents the status--profile pair $(h,\s)$ of the dynamics after $t$ steps of the imperfect best-response dynamics. Then, for every $(h,\s), (h',\s') \in H$ we set
$$
 P\big((h,\s),(h',\s')\big) = \Pr\Big(X_1 = (h',\s') \mid X_0 = (h,\s)\Big).
$$
That is, $P$ is the transition matrix of a Markov chain on state space $H$ and it describes exactly the evolution of the dynamics. Note that we are not restricting the dynamics to be memoryless, since in the status we can save the history of all previous iterations.
For a set $A \subseteq H$ we also denote $P\big((h,\s),A\big) = \sum_{(h',\s') \in A} P\big((h,\s),(h',\s')\big)$.

\begin{remark}\label{rem:markovian:initial-observation}
Observe that, for any starting status-profile $(h,\s)\in H$, if we consider only the profile component of each $X_t$, this describes a particular imperfect best response dynamics (every $h$ gives a different dynamics) starting at $\s$.
Denoting by $X_t^{(h)}$ the random variable corresponding to this dynamics  (now $X_t^{(h)}$ is a strategy profile), we can import prior bounds via the following useful inequality:
\begin{equation}\label{eq:markovian:initial-observation}
  \Prob{}{X_t = (h',\s')\mid X_0=(h,\s)} \leq \sum_{(h'',\s')\in H}\Prob{}{X_t = (h'',\s')\mid X_0=(h,\s)} = 
 \Prob{\s}{X_t^{(h)} = \s'}
\end{equation}
Clearly, the response rule and the schedule of the imperfect best-response dynamics described by $X_t^{(h)}$
inherit the same parameters of the original dynamics $X_t$.
\end{remark}

\paragraph{The restricted dynamics.}
As mentioned above, we will compare the original dynamics with a specific restriction on the subset $\hat H$ of status--profile pairs. Now we describe how this restriction is obtained. Henceforth, whenever we  refer to the restricted dynamics, we  use $\hat X_t$ and $\hat P$ in place of $X_t$ and $P$.
Then, the restricted dynamics is described by a Markov chain on state space $H$ with transition matrix $\hat P$ such that for every $(h,\s), (h',\s') \in H$
$$
 \hat P\big((h,\s),(h',\s')\big) = \begin{cases}
                           \frac{P((h,\s),(h',\s'))}{P((h,\s),\hat H)}, & \mbox{if } (h,\s), (h',\s') \in \hat H;\\
                           0, & \mbox{otherwise.}
                          \end{cases}
$$
Thus, the restricted dynamics is exactly the same as the original one except that the first never leaves the subgame $\hat G$, whereas in the latter, at each time step, there is probability at most $p$ to leave this subgame.
The following lemma quantifies this similarity, by showing that, for every $(h,\s) \in \hat H$, the \emph{total variation distance} (TV)\footnote{See Appendix~\ref{apx:tvdistance} for a review of the main properties of the total variation distance.} between the original and the restricted dynamics starting from $(h,\s)$ is small.
\begin{lemma}\label{lemma:restricted}
For every $(h,\s) \in \hat H$,
\begin{equation}\label{eq:closeness}
 \tv{P^t\big((h,\s), \cdot\big) - \hat P^t\big((h,\s), \cdot\big)} \leq \eta p t.
\end{equation}
\end{lemma}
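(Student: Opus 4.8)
The plan is to couple the original dynamics $X_t$ with the restricted dynamics $\hat X_t$, both started from $(h,\s)\in\hat H$, so that they evolve as a single process until the first time the original dynamics leaves $\hat H$, and then to invoke the coupling characterization of the total variation distance recalled in Appendix~\ref{apx:tvdistance}.

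First I would record the one-step estimate that drives everything: from any status--profile pair $(g,\mathbf r)\in\hat H$ the original dynamics stays in $\hat H$ with probability $P\big((g,\mathbf r),\hat H\big)\ge 1-\eta p$. This is exactly the one-step instance of \eqref{eq:chain:stay-pne} (take $k=r$, so that $G_k=\hat G$, and $t=1$): leaving $\hat G$ requires one of the at most $\eta$ selected players to play a NBR of $\hat G$, which happens with probability at most $p$ each, and the union bound gives $\eta p$. A short computation then shows that the two one-step laws out of $(g,\mathbf r)$ satisfy $\tv{P\big((g,\mathbf r),\cdot\big)-\hat P\big((g,\mathbf r),\cdot\big)}=1-P\big((g,\mathbf r),\hat H\big)\le\eta p$, since $\hat P$ is obtained from $P$ by deleting the mass that leaves $\hat H$ and rescaling the remainder inside $\hat H$.

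Next I would build the coupling step by step. While the two chains occupy a common state $(g,\mathbf r)\in\hat H$, I apply the maximal coupling of $P\big((g,\mathbf r),\cdot\big)$ and $\hat P\big((g,\mathbf r),\cdot\big)$: with probability $P\big((g,\mathbf r),\hat H\big)$ the two chains make the same move, and this common move lands in $\hat H$ (the agreement measure $\min(P,\hat P)$ equals $P\big((g,\mathbf r),\cdot\big)$ restricted to $\hat H$, so agreement forces a next state in $\hat H$); with the remaining probability $1-P\big((g,\mathbf r),\hat H\big)\le\eta p$ the original chain steps outside $\hat H$ while the restricted chain stays inside, and I declare the chains decoupled, letting them run under their own laws thereafter. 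By construction the chains agree up to the first decoupling time $\tau$, and each coupled step fails to extend the agreement with probability at most $\eta p$; a union bound over the $t$ steps yields $\Prob{}{\tau\le t}\le\eta p t$.

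Finally, since $X_t=\hat X_t$ on the event $\{\tau>t\}$, the coupling inequality gives $\tv{P^t\big((h,\s),\cdot\big)-\hat P^t\big((h,\s),\cdot\big)}\le\Prob{}{X_t\ne\hat X_t}\le\Prob{}{\tau\le t}\le\eta p t$, which is \eqref{eq:closeness}. The point to get right is the structure of the maximal coupling at a coupled step: one must verify that agreement can occur only on states of $\hat H$ and that the per-step disagreement probability is precisely the one-step escape probability $1-P\big((g,\mathbf r),\hat H\big)$. This is also where a naive hybrid (telescoping) argument would stumble, because $\hat P$ is stochastic only on $\hat H$ and is the zero kernel on $H\setminus\hat H$, so one cannot freely propagate powers of $\hat P$ against intermediate laws $\delta P^{t-j}$ that may already have escaped $\hat H$; the coupling sidesteps this difficulty by simply abandoning the comparison the moment the original dynamics leaves $\hat H$.
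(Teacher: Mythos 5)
Your coupling proof is correct, but it takes a genuinely different route from the paper's. The paper proves \eqref{eq:closeness} by induction on $t$ entirely inside the total-variation calculus of Appendix~\ref{apx:tvdistance}: the base case computes $\tv{P\big((h,\s),\cdot\big)-\hat P\big((h,\s),\cdot\big)}$ as exactly the one-step escape mass $P\big((h,\s),H\setminus\hat H\big)\le\eta p$ (the same estimate you derive from \eqref{eq:chain:stay-pne} via Remark~\ref{rem:markovian:initial-observation}), and the inductive step compares $P\big((h,\s),\cdot\big)P^{t-1}$ with $\hat P\big((h,\s),\cdot\big)\hat P^{t-1}$ around the pivot $\hat P\big((h,\s),\cdot\big)P^{t-1}$, bounding the first difference by monotonicity \eqref{eq:samePin_mu} and the second by \eqref{eq:same_mu} together with the induction hypothesis. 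Your argument replaces this analytic induction with a maximal coupling of the one-step kernels, a decoupling time $\tau$, and the coupling inequality; it is sound --- the agreement component of the maximal coupling is indeed $P$ restricted to $\hat H$ (since $\hat P\ge P$ on $\hat H$ and $\hat P=0$ off it), so coupled steps stay in $\hat H$ and both marginals remain correct --- and it even yields the marginally sharper bound $1-(1-\eta p)^t$. Two small corrections, though. First, the coupling characterization of total variation is \emph{not} ``recalled in Appendix~\ref{apx:tvdistance}'': that appendix contains only the definition, the triangle inequality, and the monotonicity properties \eqref{eq:samePin_mu}--\eqref{eq:samePinP}, so you must invoke the coupling inequality as a standard external fact (or prove it). Second, your closing claim that a hybrid/telescoping argument ``would stumble'' is too pessimistic: the paper's proof is precisely such an argument, and it is valid because the pivot distribution $\hat P\big((h,\s),\cdot\big)$ is supported on $\hat H$, where the rows of $\hat P^{t-1}$ are genuine probability distributions and the induction hypothesis applies --- exactly the care you correctly identify as necessary, but which can be exercised without abandoning the telescoping approach.
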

\begin{proof}
First observe that, since $(h,\s) \in \hat H$, it holds that 
\begin{equation}
 \label{eq:lemma:restricted:obs}
\sum_{(h',\s') \in H \setminus \hat H} P((h,\s), (h',\s')) \leq \sum_{(\hat h,\hat \s)\in \hat H} \sum_{(h',\s') \in H \setminus \hat H} P((\hat h,\hat \s), (h',\s')) \leq \Prob{\s}{X_1^{(h)}\not\in \hat G\mid X_0^{(h)}\in \hat G} \leq \eta p,
\end{equation}
where the last two inequalities follow from Remark~\ref{rem:markovian:initial-observation} and  Lemma~\ref{le:progress-probability}, respectively.
We next prove \eqref{eq:closeness} by induction on $t$. The base case is $t=1$ for which the set of status--profile pairs $(h',\s')$ such that $P((h,\s), (h',\s')) > \hat P((h,\s), (h',\s'))$ is exactly $H \setminus \hat H$ and thus
\begin{equation}
\label{eq:lemma:restricted:base-case}
\begin{aligned}
 \tv{P\big((h,\s), \cdot\big) - \hat P\big((h,\s), \cdot\big)} & = \sum_{(h',\s') \in H \setminus \hat H} \Big(P((h,\s), (h',\s')) - \hat P((h,\s), (h',\s'))\Big)\\
  & = \sum_{(h',\s') \in H \setminus \hat H} P((h,\s), (h',\s')) \leq \eta p,
\end{aligned}
\end{equation}
where the last inequality follows from \eqref{eq:lemma:restricted:obs}.
As for the inductive step, we have 
\begin{align*}
\tv{P^t\big((h,\s), \cdot\big) - \hat P^t\big((h,\s), \cdot\big)} & \leq \quad \mbox{(TV triangle inequality)} \\
 \tv{P\big((h,\s), \cdot\big)P^{t-1} - \hat P\big((h,\s), \cdot\big)P^{t-1}} + \, \tv{\hat P\big((h,\s), \cdot\big)P^{t-1} - \hat P\big((h,\s), \cdot\big)\hat P^{t-1}} & \leq \quad \mbox{(TV monotonicity)}\\
\tv{P\big((h,\s), \cdot\big) - \hat P\big((h,\s), \cdot\big)} + \, \sup_{(h',\s') \in \hat H} \tv{P^{t-1}\big((h',\s'), \cdot\big) - \hat P^{t-1}\big((h',\s'), \cdot\big)} & \leq \quad 
\mbox{(induction)} \\
\eta p+\eta p(t-1) = \eta  pt. \tag*{\qed} & 
\end{align*}
\let\qed\relax
\end{proof}

\paragraph{Status--profile events.}
We now describe the kind of questions about imperfect best-response mechanisms and NBR-reducible games for which a reduction can be beneficial. Roughly speaking, these are all questions about the occurrence (and the time needed for it) of events that can be described only by looking at status--profile pairs.

Specifically, a \emph{status--profile set event} for an imperfect best-response dynamics is a set of status--profile pairs. A \emph{status--profile distro event} for an imperfect best-response dynamics is a distribution on the status--profile pairs. More generally, we refer to \emph{status--profile event} if we do not care whether it is a set or a distro event.
Note that many equilibrium concepts can be described as status--profile events, like Nash equilibria, sink equilibria~\cite{gmvFOCS05}, correlated equilibria~\cite{aumann1974} or logit equilibria~\cite{afppSAGT10j}: in any case we should simply list the set of states or the distribution over states at which we are interested in. Properties like ``a profile that is visited for $k$ times'' or ``a cycle of length $k$ visited'' are other examples of status--profile events. We remark that in these  examples it is crucial that the equilibrium is defined on the status--profile pairs and not just on the profiles: indeed, the status can remember the history of the game and identify such events, whereas they are impossible to recognize if we only know the current profile.

For an NBR-reducible game $G$, a status--profile set event is \emph{reducible} if the set of status--profile pairs that represent the event contains some profile from $\hat G$. A status--profile distro event is \emph{reducible} if status--profile pairs on which is defined the distribution that represent the event contains only profiles of $\hat G$. It turns out that each one of the equilibria concepts described above is a reducible status-profile event: indeed, since all profiles not in $\hat G$ contain NBR strategies,  they are not in the support of any Nash, any sink and any correlated equilibrium; as for the logit equilibrium (that assigns non-zero probability to profiles not in $\hat G$) it is not difficult to show (see Appendix~\ref{apx:stationary_close}) that the logit equilibrium of $G$ is close to the logit equilibrium of $\hat G$.

A status--profile set event \emph{occurs} if the imperfect best-response dynamics reaches a status--profile pair in the set of pairs that represent the event.
Similarly, a status--profile distro event \emph{occurs} if the distribution on the set of profiles generated by the dynamics is close to the one that represent the event. The \emph{occurrence time} of a status--profile event is the first time step in which it occurs.

\bigskip
We are now in a position to state the main result on NBR-reducible games:
\begin{theorem}\label{th:reducible}
 For any NBR-reducible game $G$ and any small $\delta > 0$,
if a reducible status--profile event for an imperfect best-response dynamics occurs in the restricted dynamics, then it occurs with probability at least $1 - \delta$.
 Moreover, let us denote with $\tau$ the occurrence time of the event $E$ in the restricted dynamics. Then, $E$ occurs in the original dynamics in $O(R \cdot \ell \log \ell + \tau)$ steps with probability at least $1 - \delta$, whenever $p \leq \min \left\{\frac{c_1}{\eta R \cdot \ell \log \ell}, \frac{c_2}{\eta \tau}\right\}$, for suitable constants $c_1=c_1(\delta)$ and $c_2=c_2(\delta)$.
\end{theorem}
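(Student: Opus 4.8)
The plan is to split the evolution of the original dynamics into two phases and glue them together with a union bound. In the first phase the dynamics is driven, by the NBR-reduction, into the subgame $\hat G$ -- equivalently, into the set $\hat H$ of status--profile pairs whose profile lies in $\hat G$ -- and in the second phase, once inside $\hat H$, the original chain $P$ is so close to the restricted chain $\hat P$ that whatever $\hat P$ realises within $\tau$ steps is inherited by $P$. Since Lemma~\ref{le:progress-probability} and Lemma~\ref{le:hit:general} were observed to hold for NBR-reducible games (not only NBR-solvable ones), the first phase is controlled exactly as in Theorem~\ref{th:convergence_solvable} and Corollary~\ref{cor:nash}, while the second phase rests entirely on the closeness estimate of Lemma~\ref{lemma:restricted}.

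For the first phase I would start $X_t$ from an arbitrary status--profile pair and use Remark~\ref{rem:markovian:initial-observation} to reduce every statement about the status--profile chain to one about an ordinary profile dynamics $X_t^{(h)}$. Applying Lemma~\ref{le:hit:general} with $k=\ell$ to $X_t^{(h)}$, and choosing the horizon $T_1 = O(R\,\ell\log\ell)$ together with $p \le \frac{c_1}{\eta R\,\ell\log\ell}$ exactly as in the proof of Theorem~\ref{th:convergence_solvable}, gives that the profile lies in $\hat G$ at time $T_1$ with probability at least $1-\delta/2$; that is, $X_{T_1}\in\hat H$ with probability at least $1-\delta/2$, regardless of the starting pair.

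For the second phase I would condition on the pair $(h^\star,\s^\star)\in\hat H$ reached at time $T_1$ and compare the subsequent $\tau$ steps of the original dynamics with $\hat P$ started at the same pair. The cleanest way to transfer a \emph{hitting} event (a property of the whole trajectory) using the \emph{marginal} bound of Lemma~\ref{lemma:restricted} is to exploit the freedom in the status: augment $h$ with a one-bit flag that is set the first time the trajectory enters the set representing the reducible event $E$ and is never reset. Under this legitimate status, ``$E$ occurs within $\tau$ steps'' becomes the single-time event ``the flag is set at step $\tau$'', so Lemma~\ref{lemma:restricted} applies verbatim and yields
$$
 \left|\Prob{(h^\star,\s^\star)}{E \text{ occurs by } T_1+\tau} - \Prob{(h^\star,\s^\star)}{\hat E \text{ occurs by } \tau}\right| \le \eta p\,\tau,
$$
where $\hat E$ is the same event for the restricted dynamics. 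Choosing $p \le \frac{c_2}{\eta\tau}$ makes this gap at most $\delta/2$. Since by hypothesis $\hat P$ reaches $E$ within $\tau$ steps with probability $1$ (and reducibility of $E$ guarantees the target pairs lie in $\hat H$, where $\hat P$ actually lives), the original dynamics reaches $E$ within $T_1+\tau = O(R\,\ell\log\ell+\tau)$ steps with probability at least $1-\delta/2$ from $(h^\star,\s^\star)$. Combining the two phases by the law of total probability, and discarding the at most $\delta/2$ mass of trajectories that miss $\hat H$ at time $T_1$, gives overall success probability at least $1-\delta$. For a status--profile \emph{distro} event the same split applies, but the final step uses Lemma~\ref{lemma:restricted} directly on the marginals: after reaching $\hat H$, the law of $X_{T_1+\tau}$ is within total variation $\eta p\tau\le\delta/2$ of the restricted law, which is in turn close to the target distribution, so the original marginal is close to the target as well.

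The step I expect to be the main obstacle is precisely this transfer of a trajectory-level occurrence across the two chains, because Lemma~\ref{lemma:restricted} bounds only the time-$t$ \emph{marginals} whereas ``$E$ occurs by time $\tau$'' depends on the entire path. The status-augmentation device above resolves it cleanly; the equivalent route would be to unfold the coupling implicit in the proof of Lemma~\ref{lemma:restricted} -- run $P$ and $\hat P$ together until the original chain first leaves $\hat H$, an event that occurs within $\tau$ steps with probability at most $\eta p\tau$ by the union bound of Lemma~\ref{le:progress-probability} -- and observe that agreeing trajectories hit $E$ simultaneously. A secondary point requiring care is that the hypothesis must hold \emph{uniformly} over the pair $(h^\star,\s^\star)\in\hat H$ reached at the end of the first phase (or in expectation over its law), which is exactly where reducibility of $E$ is used.
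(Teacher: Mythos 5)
Your proposal follows essentially the same route as the paper's proof: a two-phase argument that first drives the dynamics into $\hat H$ within $O(R\,\ell\log\ell)$ steps using Remark~\ref{rem:markovian:initial-observation} and Lemma~\ref{le:hit:general} (with the same rescaled $(T,\delta/4\ell)$-fairness and bound on $p$), then transfers the occurrence of $E$ from the restricted chain to the original one over the next $\tau$ steps via Lemma~\ref{lemma:restricted}, finishing with a union bound over the two $\delta/2$ failure probabilities. Your one-bit flag device only makes explicit what the paper's status--profile framework already provides implicitly --- the status may record the history, so trajectory-level (hitting) events are legitimately encoded as single-time membership events --- hence it is the same proof, stated with slightly more care at that step.
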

\begin{proof}
 We will show that the dynamics will be in $\hat H$ after $O(R \cdot \ell \log \ell)$ with probability at least $1-\delta/2$; moreover, if the dynamics is in $\hat H$ after  $t$ steps, then a reducible status--profile event occurs in further $\tau$ steps with probability at least $1-\delta/2$. Hence, the probability that the event does not occurs in $O(R \cdot \ell \log \ell + \tau)$ steps will be at most $\delta$ and the theorem follows.

 Specifically, consider an interval $T$ of length $R \cdot \left\lceil \frac{\log (4\ell/\delta)}{\log(1/\varepsilon)}\right\rceil$. 
 From Remark~\ref{rem:markovian:initial-observation} and by applying Lemma~\ref{le:hit:general} with $k=\ell$, $(R,\varepsilon) = (T,\delta/4\ell)$ and $p \leq \frac{\delta}{4} \cdot \frac{1}{\eta T \ell}$ we have that for every $(h,\s) \in H$
 $$
  \Prob{}{X_{\ell T} \not\in \hat H \mid X_0 = (h,\s)} \leq \Prob{\s}{X_{\ell T}^{(h)} \not \in \hat G} \leq \delta/2.
 $$
 Finally, note that the probability that, for every $t > 0$, a reducible status--profile event occurs in $t+\tau$ steps given that after $t$ steps it is in $(h',\s') \in \hat H$, is the same as if we assume the dynamics starts in $(h',\s')$, i.e., it is equivalent to the probability that the event occurs in $\tau$ steps from $(h',\s')$.
 If the event $E$ is a distro event, i.e. the restricted dynamics converges after $\tau$ steps to a distribution $\pi$ on the status--profile pairs, then, from~\eqref{eq:closeness}, the distribution after $\tau$ steps of the original dynamics is $\pi$ except for an amount of probability of at most $\eta p \tau$. On the other hand, if the event $E$ is a set event, i.e. the restricted dynamics converges after $\tau$ steps to a set $A$ of status--profile pairs, then, from~\eqref{eq:closeness} we have
 $$
  \Pr\Big(X_\tau \in A\Big) \geq \Prob{}{\hat X_\tau \in A} - \mu p \tau = 1 - \mu p \tau,
 $$
 and hence, after $\tau$ steps, the original dynamics is in $A$ except with probability at most $\mu p \tau$.
 Then, by Lemma~\ref{lemma:restricted} and by taking $p \leq \frac{\delta}{2} \cdot \frac{1}{\eta \tau}$, the probability that the event occurs in $\tau$ steps for the original dynamics starting from $(h',\s')$ is at least $1 - \delta/2$.
\end{proof}

\begin{example}[interference games]
 The following is an instance of \emph{interference games} \cite{AulMosPenPer08} which model interference in wireless networks according to the Signal-to-Interference-plus-Noise-Ratio (SINR) physical model.  Formally, the game has the following utilities:
 \[
  u_i(\s) = \begin{cases}
                           v_i - s_i  & \mbox{if } s_i > \sum_{j\neq i} s_j\\
                           - s_i & \mbox{otherwise}
                 \end{cases},
 \]
where $v_i$ is some value assigned to each player (representing the utility if the transmission succeeds), strategy $s_i$ is the energy spent 
for transmitting, and the condition above indicates whether $i$'s transmission is successful received or not. For two players with $v_1=v_2=1+\gamma$, $0<\gamma<1$, and $S_i=\{0,1,2,3\}$ the payoffs are
$$
\begin{game}{4}{4}
      & 0     & 1 & 2 & 3 \\
0    & 0,0    & 0,$\gamma$ & 0, $\gamma - 1$ & 0, $\gamma - 2$ \\
1    & $\gamma$,0    & -1,-1 & -1,$\gamma - 1$ & -1,$\gamma - 2$ \\
2    & $\gamma - 1$,0    & $\gamma - 1$,-1 & -2,-2 & -2,$\gamma - 2$ \\
3    & $\gamma - 2$,0    & $\gamma - 2$,-1 & $\gamma - 2$,-2 & -3,-3
\end{game}
$$
which is NBR-reducible to this subgame:
$$
\begin{game}{2}{2}
      & 0     & 1  \\
0    & 0,0    & 0,$\gamma$  \\
1    & $\gamma$,0    & -1,-1  
\end{game}
$$
Note that this is a potential game, with potential $\Phi$ given by the following table:
$$
\begin{game}{2}{2}
      & 0     & 1  \\
0    & 0    & $-\gamma$  \\
1    & $-\gamma$    & $1-\gamma$  
\end{game}
$$
The logit dynamics with parameter $\beta$ on this subgame converges to the following distribution $\hat \pi$:
$$
\begin{game}{2}{2}
      & 0     & 1  \\
0    & $1/Z$    & $e^{\beta\gamma}/Z$  \\
1    & $e^{\beta\gamma}/Z$    & $e^{\beta(\gamma-1)}/Z$  
\end{game}
$$
where $Z = 1+ 2e^{\beta\gamma} +e^{\beta(\gamma -1)}$. This means that, for growing values of $\beta$, the dynamics ``converges'' to the two Nash equilibria of the game (the stationary distribution of profiles $(0,1)$ and $(1,0)$ increase whereas the probability of the other profiles diminishes). 
 
According to Theorem~\ref{th:reducible},  the logit dynamics with parameter $\beta$ on the original interference game can be analyzed by looking at the dynamics on the subgame only, provided $p$ being small enough. (Note that in general both $p$ and $\tau$ in Theorem~\ref{th:reducible} depend of $\beta$.) For instance, taking the event $E=$``the dynamics is in one of the two PNE'', $\tau$ is the hitting time of such PNE's and, for $\beta$ large enough, the original dynamics converges to a PNE in $O(R+\tau)$ time steps.  
\end{example}

The reader may have noticed that, in the previous example, the parameters are set so as to obtain a subgame which is a potential game. Indeed, the class of interference games \cite{AulMosPenPer08} does not posses pure Nash equilibria in general. For instance, setting $v_1=v_2=3$ and $S_i=\{0,1,\ldots,4\}$ would give an interference game which is NBR-reducible to a subgame containing a sink equilibrium \cite{gmvFOCS05} (but no pure Nash equilibrium). Though the subgame is not a potential game, Theorem~\ref{th:reducible} suggests a natural way to tackle the problem by analyzing the sink equilibria of the subgame (e.g, the time for the restricted dynamics to reach such an equilibrium).

\medskip
We conclude this section by observing that Theorem~\ref{th:reducible} is similar in spirit to the results by Candogan et al.~\cite{NearPotG}.
Both results aims to link the behavior of dynamics for a game $G$ to the behavior of the dynamics for a close game $\hat{G}$,
for which the dynamics is well-understood.
In \cite{NearPotG} are considered games that are ``close'' to potential games and
it has been showed that the behavior of best and better response dynamics, logit dynamics and fictitious play
can be approximated by the equilibrium behavior of these dynamics on the close potential games.
The extent at which this approximation is good depends on the distance between the games and
on the number of profiles of these games (that is exponential on the number of players).
Our result instead refers to generic imperfect response dynamics
(and thus they holds for best response and logit dynamics but not for better response dynamics and fictitious play),
but we consider only a very specific ``closeness measure'', namely that the game $G$ is NBR-reducible to the game $\hat{G}$.
On the other side, the game $\hat{G}$ can approximate $G$ arbitrarily well,
not only with respect to the equilibrium behavior, but also with respect to the transient behavior of the dynamics.

\section{Conclusions}
This work addresses how a (small) probability of selecting a non-best response by the players can affect the convergence and incentive compatibility properties of a class of dynamics and mechanisms in \cite{NSVZ11}. We have first show that there are games for which convergence occurs only for 
a probability $p$ \emph{exponentially} small in the number of players (Theorem~\ref{th:negative}). Our positive result (Theorem~\ref{th:convergence_solvable}) says that convergence can be achieved if $p$ is ``sufficiently small'' compared to certain parameters of the game and of the schedule (the number $\ell$ of rounds needed in the original dynamics without mistakes and the length $R$ of a ``probabilistic round'' of the schedule -- see Definition~\ref{def:fair-selection}). Note that, in this case, the convergence time 
of the imperfect best-response dynamics is only moderately larger than the convergence time of the perfect best-response dynamics: When the latter is guaranteed to converge in $R\ell$ time steps (Theorem~\ref{th:main:Nisan-et-al}) the former converges in $O(R\ell \log \ell)$ time steps with good probability (Theorem~\ref{th:convergence_solvable}). 

As for  incentive compatibility, we showed that imperfect best-response dynamics require a stronger condition than the one sufficient for perfect best-response dynamics (cf. Theorems~\ref{th:main:Nisan-et-al} and \ref{th:ic-negative}). The sufficient condition (Theorem~\ref{th:p-BR:incentive-compatible}) is essentially a quantitative version of the condition in \cite{NSVZ11} (clear outcome) which requires the payoffs at the equilibrium to be ``sufficiently'' larger than at other profiles.

Finally, we suggest a natural extension to games which are not NBR-solvable, and the elimination of NBR strategies yields only a subgame. 
In such cases, it is natural to consider the dynamics of the subgame as a good approximation of the dynamics of the original game. 
Theorem~\ref{th:reducible} gives a quantitative bound on the time to reach a certain ``event'' as the sum of two quantities: The time for the dynamics to converge to the subgame plus the time the dynamics restricted to the subgame takes to reach the same event.  
We feel Theorem~\ref{th:reducible} might have several applications to games which are reducible to a subgame whose structure makes the analysis of the dynamics easier. For instance, if the subgame is a potential game, logit dynamics of the original game (even if this is not a potential game) can be accurately described by the logit dynamics of the subgame which is known to have an explicit simple stationary distribution \cite{blumeGEB93}.\footnote{In the conference version of this work \cite{imperfectSAGT}, we claimed that in a modified version of PageRank games \cite{hs2008},
there exists a subgame which is a potential game and thus our results can be combined with \cite{blumeGEB93}
to obtain a good approximation of the logit dynamics for these games.
Unfortunately, this claim was wrong and the logit dynamics for this subgame is in general not easy to analyze.} Also, when the subgame $\hat G$ is \emph{not} a potential game, it might be possible to combine our results with those in \cite{NearPotG} in a two-steps analysis: Reduce the game $G$ to a subgame $\hat G$  and then study this game by considering the ``closest'' potential game $\tilde G$. More in general, the advantage of reducing the analysis to a subgame $\hat G$, is that the resulting dynamics can be simpler to analyze using known techniques. For instance, for some potential games,  it is possible to analyze variants of the logit dynamics that differ in the schedule of the players \cite{afnGEB10,AulFerPasPenPer11}.

\section*{Acknowledgments}
Part of the work has been done when both authors were at the Universit\`a di Salerno and later when the first author was at Universit\'e Paris Dauphine.

\bibliographystyle{plain}
\bibliography{imperfect-best-response}

\begin{thebibliography}{10}

\bibitem{afnGEB10}
Carlos Al{\'o}s-Ferrer and Nick Netzer.
\newblock The logit-response dynamics.
\newblock {\em Games and Economic Behavior}, 68(2):413--427, 2010.

\bibitem{AulFerPasPenPer11}
Vincenzo Auletta, Diodato Ferraioli, Francesco Pasquale, Paolo Penna, and
  Giuseppe Persiano.
\newblock Convergence to equilibrium of logit dynamics for strategic games.
\newblock In {\em Proc. of the 23rd Annual ACM Symposium on Parallelism in
  Algorithms and Architectures (SPAA)}, pages 197--206. ACM, 2011.

\bibitem{AFPPG12}
Vincenzo Auletta, Diodato Ferraioli, Francesco Pasquale, Paolo Penna, and
  Giuseppe Persiano.
\newblock Logit dynamics with concurrent updates for local interaction games.
\newblock In {\em Proc. of the 21st Annual European Symposium on Algorithms
  (ESA)}, volume 8125 of {\em LNCS}, pages 73--84, 2013.

\bibitem{afppSAGT10j}
Vincenzo Auletta, Diodato Ferraioli, Francesco Pasquale, and Giuseppe Persiano.
\newblock Mixing time and stationary expected social welfare of logit dynamics.
\newblock {\em Theory Comput. Syst.}, 53(1):3--40, 2013.

\bibitem{AulMosPenPer08}
Vincenzo Auletta, Luca Moscardelli, Paolo Penna, and Giuseppe Persiano.
\newblock Interference games in wireless networks.
\newblock In {\em 4th International Workshop On Internet And Network Economics
  (WINE)}, volume 5385 of {\em LNCS}, pages 278--285, 2008.

\bibitem{aumann1974}
Robert~J. Aumann.
\newblock Subjectivity and correlation in randomized strategies.
\newblock {\em Journal of Mathematical Economics}, 1(1):67--96, 1974.

\bibitem{blumeGEB93}
Lawrence~E. Blume.
\newblock The statistical mechanics of strategic interaction.
\newblock {\em Games and Economic Behavior}, 5:387--424, 1993.

\bibitem{NearPotG}
Ozan Candogan, Asuman Ozdaglar, and Pablo~A. Parrilo.
\newblock Near-potential games: Geometry and dynamics.
\newblock {\em ACM Trans. Econ. Comput.}, 1(2):11:1--11:32, May 2013.

\bibitem{imperfectSAGT}
Diodato Ferraioli and Paolo Penna.
\newblock Imperfect best-response mechanisms.
\newblock In {\em SAGT}, pages 243--254, 2013.

\bibitem{GR01}
Lixin Gao and Jennifer Rexford.
\newblock Stable internet routing without global coordination.
\newblock {\em IEEE/ACM Trans. Netw.}, 9(6):681--692, 2001.

\bibitem{GSZS10}
Brighten Godfrey, Michael Schapira, Aviv Zohar, and Scott Shenker.
\newblock Incentive compatibility and dynamics of congestion control.
\newblock In {\em SIGMETRICS}, pages 95--106, 2010.

\bibitem{gmvFOCS05}
Michel Goemans, Vahab Mirrokni, and Adrian Vetta.
\newblock Sink equilibria and convergence.
\newblock In {\em Proc. of the 46th Annual IEEE Symposium on Foundations of
  Computer Science (FOCS)}, pages 142--154. IEEE, 2005.

\bibitem{HarSel88}
John~C. Harsanyi and Reinhard Selten.
\newblock {\em A General Theory of Equilibrium Selection in Games}.
\newblock MIT Press, 1988.

\bibitem{hs2008}
John Hopcroft and Daniel Sheldon.
\newblock Network reputation games, 2008.

\bibitem{KMR93}
Michihiro Kandori, George~J. Mailath, and Rafael Rob.
\newblock Learning, mutation, and long run equilibria in games.
\newblock {\em Econometrica}, 61(1):29--56, 1993.

\bibitem{KR95}
Michihiro Kandori and Rafael Rob.
\newblock Evolution of equilibria in the long run: A general theory and
  applications.
\newblock {\em Journal of Economic Theory}, 65(2):383 -- 414, 1995.

\bibitem{LSZ11}
Hagay Levin, Michael Schapira, and Aviv Zohar.
\newblock Interdomain routing and games.
\newblock {\em SIAM J. Comput.}, 40(6):1892--1912, 2011.

\bibitem{NSVZ11auctions}
Noam Nisan, Michael Schapira, Gregory Valiant, and Aviv Zohar.
\newblock Best-response auctions.
\newblock In {\em Proc. of the 12th ACM Conference on Electronic Commerce
  (EC)}, pages 351--360, 2011.

\bibitem{NSVZ11}
Noam Nisan, Michael Schapira, Gregory Valiant, and Aviv Zohar.
\newblock Best-response mechanisms.
\newblock In {\em Proc. of the 2nd Symp. on Innovations in Computer Science
  (ICS)}, pages 155--165, 2011.

\bibitem{Young93}
H.~Peyton Young.
\newblock The evolution of conventions.
\newblock {\em Econometrica}, 61(1):57--84, 1993.

\end{thebibliography}

\newpage
\appendix

\section{Models for limited knowledge and bounded rationality}
\subsection{Mutation and mistakes models}\label{apx:mistakes}
The \emph{mutation} and the  \emph{mistakes} model adopt the same response rule: at every time step, each selected player updates her strategies to the best response to what other players are currently doing except with probability $\varepsilon$. With such a probability, a \emph{mutation} or \emph{mistake} occurs, meaning that the selected player choose a strategy uniformly at random. That is, suppose player $i$ is selected at time step $t$ and the current strategy profile is $\s^t$. We denote with $b_i(\s^t)$ the best response of $i$ to profile $\s^t$ (if more than one best response exists and the current strategy $x^t_i$ of $i$ is a best response, then we set $b_i(\s^t) = x^t_i$, otherwise we choose one of the best response uniformly at random). Then, a strategy $s_j \in S_i$ will be selected by $i$ with probability
$$
 p_{ij} = \begin{cases}
           (1-\varepsilon) + \varepsilon \cdot \frac{1}{|S_i|}, & \text{if } s_j = b_i(\s^t);\\
           \varepsilon \cdot \frac{1}{|S_i|}, & \text{otherwise.}
          \end{cases}
$$
The main difference between these models concerns the schedule: the mutation model assumes that at each time step every player is selected for update; the mistakes model assumes that at each time step only one player is selected uniformly at random for update.

\subsection{Logit dynamics}\label{apx:logit}
Logit dynamics is another kind of imperfect best response dynamics in which the probability of deviating from best response is determined by a parameter $\beta\geq 0$. 
The \emph{logit dynamics} for a game $G$ with parameter $\beta$ runs as follows. At every time step 
\begin{enumerate}
 \item Select one player $i$ uniformly at random;
 \item Update the strategy of player $i$ according to the following probability distribution:
 \[
 \sigma_i(s_i,\s_{-i}) = \frac{e^{\beta u_i(s_i,\s_{-i})}}{Z(\s_{-i})},
\]
where $Z(\s_{-i})=\sum_{s_i'}e^{\beta u_i(s_i',\s_{-i})}$ and $\beta\geq 0$. 
\end{enumerate}

\begin{remark}\label{re:logit-imperfect}
 The parameter $\beta$ is sometimes called the ``inverse noise'' parameter: for $\beta=0$ the player chooses the strategy uniformly at random, while for $\beta\rightarrow \infty$ the player chooses a best response (if more than one best response is available, she selects one of these uniformly at random). Thus, the probability that $i$ plays a best response is guaranteed to be least $1/|S_i|$. So,  every logit dynamics is 
 $p$-imperfect for some $p<1$. Moreover, if the payoffs between a best response and a non-best response differ by at least $\gamma$, then the dynamics is $p$-imperfect with 
 $
 p \leq ({m-1})/({m-1 + e^{\gamma\beta}}).
 $
\end{remark}

The above dynamics defines an ergodic Markov chain \cite{blumeGEB93} over the set of all strategy profiles and with transition probabilities $P$ given by 
\begin{equation}\label{eq:transmatrix}
P(\s, \s') = \frac{1}{n} \cdot
\begin{cases}
\sigma_i(s_i ,\s_{-i}), & \quad \mbox{ if } \s_{-i} = \s'_{-i} \mbox{ and } s_i \neq s'_i; \\
\sum_{i=1}^n \sigma_i(s_i , \s_{-i}), & \quad \mbox{ if } \s = \s'; \\
0, & \quad \mbox{ otherwise.}
\end{cases}
\end{equation}
Since the Markov chain is ergodic, the chain converges to its (unique) \emph{stationary distribution} $\pi$. That is, for any starting profile $\s$ and any profile $\s'$,  
\[
 \lim_{t \rightarrow \infty} P^t(\s,\s') = \pi(\s'),
\]
where $P^t(\s,\s')$ is the probability that the dynamics starting with profile $\s$ is in profile $\s'$ after $t$ steps. Thus, the stationary distribution represents the equilibrium reached by the logit dynamics, also called 
the \emph{logit equilibrium} of game $G$ \cite{afppSAGT10j}.

The stationary distribution of logit dynamics is fully understood for the class of so-called \emph{potential games}.
We recall that a game is a potential game if there exists a function $\Phi$ such that, for all $i$, for all  $\s_{-i}$, and for all $s_i,s_i'$, it holds that
\[
 \Phi(s_i,\s_{-i}) - \Phi(s_i',\s_{-i})  = u_i(s_i',\s_{-i}) - u_i(s_i,\s_{-i}).
\]
Blume~\cite{blumeGEB93} showed that in this case, the stationary distribution of the corresponding logit dynamics is 
$$
\pi(\s) = \frac{e^{-\beta \Phi(\s)}}{Z} ,
$$
where
$Z = \sum_{\s' \in S} e^{-\beta \Phi(\s')}$
is the normalizing constant.

Finally, the time to converge to the logit equilibrium is 
the so-called \emph{mixing time} of the corresponding Markov chain \cite{afppSAGT10j}, defined as follows. For any $\epsilon>0$, consider 
\begin{equation}
 \label{eq:t_mix} 
 t_{mix}(\epsilon) := \min_{t \in \mathbb{N}}\max_{\s\in \Omega} \left\{\tv{P^t(\s,\cdot) - \pi} \leq \epsilon\right\},
\end{equation}
where $\tv{P^t(\s,\cdot) - \pi} = \frac{1}{2}\sum_{\s'' \in \Omega} |P^t(\s,\s'') - \pi(\s'')|$ is the total variation distance (see Section~\ref{apx:tvdistance} for more details). This quantity measures the time needed for the chain (dynamics) to get close to stationary within an additive factor of $\epsilon$. The \emph{mixing time} of the chain is simply defined as $t_{mix}:=t_{mix}(1/2)$, since it is well-known that  $t_{mix}(\epsilon) \leq t_{mix}\cdot \log (1/\epsilon)$ for any $\epsilon$.

\section{Total variation distance}\label{apx:tvdistance}
The \emph{total variation distance} between distributions $\mu$ and $\hat \mu$ on an enumerable state space $\Omega$ is
\[
 \tv{\mu - \hat \mu}:= \frac{1}{2}\sum_{x \in \Omega} \mid \mu(x) - \hat \mu(x)| = \sum_{\begin{subarray}{c}x \in \Omega\\\mu(x) > \hat \mu(x)\end{subarray}} \mu(x) - \hat \mu(x).
\]
Note that the total variation distance satisfies the usual triangle inequality of distance measures, i.e.,
\[
 \tv{\mu - \hat \mu} \leq \tv{\mu - \mu'} + \tv{\mu' - \hat \mu},
\]
for any distributions $\mu$ and $\mu'$.
Moreover, the following  monotonicity properties hold:
\begin{align}
 \tv{\mu P - \hat \mu P} & \leq \tv{\mu - \hat \mu}, \label{eq:samePin_mu}\\
 \tv{\mu P - \mu \hat P} & \leq \sup_{x \in \Omega} \tv{P(x,\cdot) - \hat P(x,\cdot)}, \label{eq:same_mu}\\
 \tv{\mu P - \hat \mu P} & \leq \sup_{x,y \in \Omega} \tv{P(x,\cdot) - P(y,\cdot)}, \label{eq:samePinP}
\end{align}
where $P$ and $\hat P$ are stochastic matrices.
Indeed, as for \eqref{eq:samePin_mu} we have
\begin{align*}
 \tv{\mu P - \hat \mu P} = \tv{(\mu - \hat \mu) P} & = \frac{1}{2} \sum_{y \in \Omega} \left|  \sum_{x \in \Omega} (\mu(x) - \hat \mu(x)) P(x,y)\right|\\
 & \leq \frac{1}{2} \sum_{x \in \Omega} \left| \mu(x) - \hat \mu(x)\right| \sum_{y \in \Omega} P(x,y)\\
 & = \tv{\mu - \hat \mu}.
\end{align*}
As for \eqref{eq:same_mu} we observe that
\begin{align*}
\tv{\mu P - \mu \hat P} = \tv{\mu (P - \hat P)} & = \frac{1}{2} \sum_{y \in \Omega} \left| \sum_{x \in \Omega} \mu(x) (P(x,y) - \hat P(x,y))\right|\\
 & \leq  \sum_{y \in \Omega}  \left( \frac{1}{2} \sum_{x \in \Omega} \mu(x)\left| P(x,y) - \hat P(x,y)\right| \right)\\
 \\
 & =  \sum_{x \in \Omega}  \mu(x)\left( \frac{1}{2} \sum_{y \in \Omega} \left| P(x,y) - \hat P(x,y)\right| \right)\\
 & \leq \sup_{x \in \Omega} \tv{P(x,\cdot) - \hat P(x,\cdot)}.
\end{align*}
Finally, for \eqref{eq:samePinP} we have
\begin{align*}
\tv{\mu P - \hat \mu P} & = \tv{\sum_{z \in \Omega} \mu(z) \sum_{w \in \Omega} \hat \mu(w) \left( P(z,\cdot) - P(w, \cdot) \right)}\\
& \leq \sum_{z \in \Omega} \mu(z) \sum_{w \in \Omega} \hat \mu(w)  \tv{P(z,\cdot) - P(w, \cdot)}\\
& \leq \sup_{x,y \in \Omega} \tv{P(x,\cdot) - P(y,\cdot)}.
\end{align*}

\section{Equilibria of logit dynamics and NBR-reducible games}\label{apx:stationary_close}
In this section we specialize the approach described in Section~\ref{sec:reductions} to the case of logit dynamics. 
First of all, the status ``$h$'' is immaterial in this case since the dynamics is by definition Markovian (i.e., the transition probabilities depend only on the current profile).

We next provide a sufficient condition for which the (equilibrium of) the logit dynamics of the subgame 
is a good approximation of the (equilibrium of) the  logit dynamics of the original game.
Let $G$ be a game NBR-reducible to $\hat G$, and let $\pi_\beta$ and $\hat \pi_\beta$ denote 
the stationary distributions of the logit dynamics with parameter $\beta$ for $G$ and $\hat G$, respectively. 
The following lemma says that $\pi_\beta$ and $\hat \pi_\beta$ are close to each other (in total variation) if $\beta$ is large enough.

\begin{lemma} Let $\hat \tau_\beta$ be the mixing time of the restricted chain given by the logit dynamics with parameter $\beta$, and let $p=p_\beta$ be the corresponding probability of selecting a NBR strategy. If $\lim_{\beta \rightarrow \infty} (p_\beta\hat\tau_\beta) = 0$, then for every $\delta > 0$, there exists a constant $\beta_\delta$ such that
 $$
  \tv{\pi_\beta - \hat \pi_\beta} \leq \delta
 $$
 for all $\beta\geq \beta_\delta$.
\end{lemma}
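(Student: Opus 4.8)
The plan is to avoid reasoning about the mixing time of the \emph{original} chain $P$ (which we do not control) and instead establish that, from \emph{every} starting profile, the $T$-step distribution of $P$ is uniformly close to $\hat\pi_\beta$ for a suitable horizon $T$. Once this is done, stationarity finishes the job immediately: writing $\pi_\beta=\pi_\beta P^T=\sum_{\s}\pi_\beta(\s)P^T(\s,\cdot)$ and $\hat\pi_\beta=\sum_{\s}\pi_\beta(\s)\hat\pi_\beta$, convexity of the total variation distance gives $\tv{\pi_\beta-\hat\pi_\beta}\le\max_{\s}\tv{P^T(\s,\cdot)-\hat\pi_\beta}$. Throughout I use that for logit dynamics the status is immaterial and $\eta=1$, so Lemma~\ref{lemma:restricted} reads $\tv{P^t(\s,\cdot)-\hat P^t(\s,\cdot)}\le p t$ for $\s\in\hat G$, and that the hypothesis $p_\beta\hat\tau_\beta\to0$ forces $p_\beta\to0$ (since $\hat\tau_\beta\ge1$), which is exactly what the convergence machinery of the paper needs.

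To bound $\max_{\s}\tv{P^T(\s,\cdot)-\hat\pi_\beta}$ I would split the horizon as $T=T_1+t^{\ast}$ into a \emph{reaching} phase and a \emph{mixing} phase. For the reaching phase, fix a small constant $\epsilon=\delta/3$ and take $T_1=\Theta(R\ell\log\ell)$; since $R$ and $\ell$ are parameters of the game and schedule (independent of $\beta$) and $p_\beta\to0$, Corollary~\ref{cor:nash} applies for all large $\beta$ and yields $P^{T_1}(\s,\hat G)\ge1-\delta/3$ from any start $\s$. For the mixing phase I set $t^{\ast}=\hat\tau_\beta(\epsilon)\le\hat\tau_\beta\log(1/\epsilon)$; for any profile $\s_1\in\hat G$, combining the restricted-chain mixing bound $\tv{\hat P^{t^{\ast}}(\s_1,\cdot)-\hat\pi_\beta}\le\epsilon$ with Lemma~\ref{lemma:restricted} gives $\tv{P^{t^{\ast}}(\s_1,\cdot)-\hat\pi_\beta}\le\epsilon+p_\beta t^{\ast}$.

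These two phases combine by conditioning on the profile reached at time $T_1$. Decomposing $P^T(\s,\cdot)-\hat\pi_\beta=\sum_{\s_1}P^{T_1}(\s,\s_1)\bigl(P^{t^{\ast}}(\s_1,\cdot)-\hat\pi_\beta\bigr)$ and splitting the sum over $\s_1\in\hat G$ and $\s_1\notin\hat G$, the in-$\hat G$ part contributes at most $\epsilon+p_\beta t^{\ast}$ (weighted by a probability $\le1$), while the out-of-$\hat G$ part contributes at most its total mass $1-P^{T_1}(\s,\hat G)\le\delta/3$ (using that each total variation distance is at most $1$). Hence $\max_{\s}\tv{P^T(\s,\cdot)-\hat\pi_\beta}\le\epsilon+p_\beta t^{\ast}+\delta/3=2\delta/3+p_\beta\hat\tau_\beta(\epsilon)$, and therefore $\tv{\pi_\beta-\hat\pi_\beta}\le2\delta/3+p_\beta\hat\tau_\beta(\epsilon)$. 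Finally, since $\epsilon=\delta/3$ is fixed, $p_\beta\hat\tau_\beta(\epsilon)\le p_\beta\hat\tau_\beta\log(3/\delta)\to0$ by hypothesis, so choosing $\beta_\delta$ large enough to make this last term $\le\delta/3$ gives the claim.

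The main obstacle is precisely that we have no control over the mixing time of the original chain $P$, so the naive route --- run $P$ past its own mixing time and invoke Lemma~\ref{lemma:restricted} --- is unavailable. The two-phase ``reach-then-couple'' argument circumvents this: the convergence results of the paper guarantee that $P$ falls into $\hat G$ in a $\beta$-independent number of steps $T_1$, after which Lemma~\ref{lemma:restricted} lets the original and restricted chains travel together for the $\hat\tau_\beta(\epsilon)$ additional steps needed for the restricted chain to mix, the coupling error $p_\beta\hat\tau_\beta(\epsilon)$ staying negligible exactly because of the hypothesis $p_\beta\hat\tau_\beta\to0$. The only point requiring care is the bookkeeping of the small probability mass that fails to reach $\hat G$ during the reaching phase, which is absorbed cleanly using the trivial bound that any total variation distance is at most one.
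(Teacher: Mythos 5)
Your proof is correct, and its skeleton matches the paper's: (i) the chain falls into $\hat G$ within $\Theta(R\ell\log\ell)$ steps from any start (the hitting machinery of Lemma~\ref{le:hit:general}/Corollary~\ref{cor:nash}, applicable for large $\beta$ because $p_\beta\le p_\beta\hat\tau_\beta\to 0$), (ii) from inside $\hat G$ the original chain tracks the restricted one via Lemma~\ref{lemma:restricted} while the latter mixes to $\hat\pi_\beta$, and (iii) stationarity $\pi_\beta=\pi_\beta P^T$ transfers the per-start bound to $\pi_\beta$. Where you genuinely differ is the assembly of these pieces. The paper first proves a pairwise bound $\tv{P^\tau(\hat\s,\cdot)-P^\tau(\hat\s',\cdot)}\le\delta/2$ for starts in $\hat G$, then extends to arbitrary starts by introducing the auxiliary conditioned kernel $\hat Q$ (the restriction of $Q=P^{\ell T}$ to $\hat G$) and invoking the total-variation monotonicity inequalities \eqref{eq:samePin_mu} and \eqref{eq:samePinP} of Appendix~\ref{apx:tvdistance}, and only then compares $P^{t^\star}$ with $\hat P^{t^\star}$ and $\hat\pi_\beta$; note this last step applies Lemma~\ref{lemma:restricted} over the full horizon $t^\star=\ell T+\tau$. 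You instead bound $\tv{P^{T_1+t^\ast}(\s,\cdot)-\hat\pi_\beta}$ directly by conditioning on the time-$T_1$ state and splitting the resulting convex combination over $\s_1\in\hat G$ (each term at most $\epsilon+p_\beta t^\ast$) and $\s_1\notin\hat G$ (total mass at most $\delta/3$, each term at most $1$). This needs only convexity of the total variation distance and the trivial bound $\tv{\cdot}\le 1$, dispenses with $\hat Q$ and the monotonicity lemmas, and invokes Lemma~\ref{lemma:restricted} only over the $t^\ast=\hat\tau_\beta(\delta/3)$ mixing-phase steps; both error accountings close because $p_\beta\to 0$ and $p_\beta\hat\tau_\beta\to 0$, but yours is the leaner bookkeeping. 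One shared caveat, not a gap relative to the paper: like the paper, you identify the distribution to which the restricted (conditioned) chain mixes with $\hat\pi_\beta$, the logit equilibrium of the subgame $\hat G$; this identification is part of the lemma's setup in both arguments.
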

\begin{proof}
 Let $\tau = \hat t_{mix}^{(\beta)}(\delta/8)$ be the mixing time of the restricted chain. Consider first two copies of the chain starting in profiles $\hat \s, \hat \s' \in \hat G$ and bound the total variation after $\tau$ time steps:
\begin{align*}
 \tv{P^\tau(\hat \s,\cdot )-P^\tau(\hat \s',\cdot)} & \leq \tv{P^\tau(\hat \s,\cdot)-\hat P^\tau(\hat \s,\cdot)} + \tv{\hat P^\tau(\hat \s,\cdot)- \hat \pi}\\
 & \quad + \tv{\hat \pi - \hat P^\tau(\hat \s',\cdot)} + \tv{\hat P^\tau(\hat \s',\cdot)- P^\tau(\hat \s',\cdot)}\\
 & \leq 4 \cdot \frac{\delta}{8} = \delta/2,
\end{align*}
where the last inequality is due to Lemma~\ref{lemma:restricted} by taking $\beta$ sufficiently large (note that $\eta p \tau = \eta p_\beta t^{(\beta)}_{mix}(\delta/8)\leq\eta p_\beta \hat \tau_\beta \log (8/\delta)$, which tends to $0$ as $\beta \rightarrow \infty$ by hypothesis).
Consider an interval $T$ of length $R \cdot \left\lceil \frac{\log (8\ell/\delta)}{\log(1/\varepsilon)}\right\rceil$. By applying Lemma~\ref{le:hit:general} with $k=\ell$, $(R,\varepsilon) = (T,\delta/4\ell)$ and $\beta$ sufficiently large we have that for every $\s \in G$
 $$
  \Prob{\s}{X_{\ell T} \notin \hat G} \leq \delta/8.
 $$
Let $t^\star = \ell T + \tau$ and $Q = P^{\ell T}$. Then, for every $\s,\s' \in G$
\begin{align*}
 \tv{\pi - P^{t^\star}(\s', \cdot)} & \leq \tv{P^{t^\star}(\s, \cdot) - P^{t^\star}(\s', \cdot)} = \tv{Q(\s, \cdot)P^\tau - Q(\s', \cdot)P^\tau}\\
 \mbox{(triangle inequality)} & \leq \tv{Q(\s, \cdot)P^\tau - \hat Q(\s, \cdot)P^\tau} + \tv{\hat Q(\s, \cdot)P^\tau - \hat Q(\s', \cdot)P^\tau}\\
 & \quad + \tv{\hat Q(\s', \cdot)P^\tau - Q(\s', \cdot)P^\tau},
\end{align*}
where, for every $\s, \s' \in G$, we set
$$
 \hat Q(\s,\s') = \begin{cases}
                           \frac{Q(\s,\s')}{Q(\s,\hat G)}, & \mbox{if } \s, \s' \in \hat G;\\
                           0, & \mbox{otherwise.}
                          \end{cases}
$$
By \eqref{eq:samePin_mu} we obtain
$$
 \tv{Q(\s, \cdot)P^\tau - \hat Q(\s, \cdot)P^\tau} \leq \tv{Q(\s, \cdot) - \hat Q(\s, \cdot)} \leq \Prob{\s}{X_{\ell T} \notin \hat G} \leq \delta/8.
$$
By \eqref{eq:samePinP} we obtain
$$
 \tv{\hat Q(\s, \cdot)P^\tau - \hat Q(\s', \cdot)P^\tau} \leq \max_{\hat \s, \hat \s' \in \hat G} \tv{P^\tau(\hat \s, \cdot) - P^\tau(\hat \s', \cdot)} \leq \delta/2
$$
and thus $\tv{\pi - P^{t^\star}(\s', \cdot)} \leq 3\delta/4$.
Finally, for every $\hat \s \in \hat G$, by triangle inequality
\begin{align*}
 \tv{\pi - \hat \pi} & \leq \tv{\pi - P^{t^\star}(\hat \s, \cdot)} + \tv{P^{t^\star}(\hat \s, \cdot) - \hat P^{t^\star}(\hat \s, \cdot)} + \tv{\hat P^{t^\star}(\hat \s, \cdot) - \hat \pi}\\
 & \leq 3 \delta/4 + \delta/8 + \delta/8 = \delta.\tag*{\qed}
\end{align*}
\let\qed\relax
\end{proof}

\end{document}